\newtheorem{theorem}{Theorem} 
\newtheorem{proposition}{Proposition}
\newtheorem{lemma}{Lemma}
\newtheorem{corollary}{Corollary}
\newtheorem{definition}{Definition}
\newtheorem{remark}[definition]{Remark}
\newcommand{\Not}{{\sim}}
\newcommand{\car}{\mathop{\circlearrowright}}
\title{A note on Grigoriev and Zaitsev's system {\bf CNL$^2_4$}\thanks{The research by Hitoshi Omori was initially supported by a Sofja Kovalevskaja Award of the Alexander von Humboldt-Foundation, funded by the German Ministry for Education and Research. The research by Jonas R. B. Arenhart is supported by CNPq (Brazilian National Research Counsil). We would like to thank the referees for their careful reading, helpful suggestions, and supportive remarks. 
}}
\author{Hitoshi Omori
\institute{Graduate School of Information Sciences\\
Tohoku University\\
Sendai, Japan}
\email{hitoshiomori@gmail.com}
\and
Jonas R. B. Arenhart
\institute{Department of Philosophy\\
Federal University of Santa Catarina\\ 
Florianópolis, Brazil}
\email{jonas.becker2@gmail.com}
}
\begin{document}
\maketitle

\begin{abstract}
The present article examines a system of four-valued logic recently introduced by Oleg Grigoriev and Dmitry Zaitsev. In particular, besides other interesting results, we will clarify the connection of this system to related systems developed by Paul Ruet and Norihiro Kamide. By doing so, we discuss two philosophical problems that arise from making such connections quite explicit: first, there is an issue with how to make intelligible the meaning of the connectives and the nature of the truth values involved in the many-valued setting employed --- what we have called `the Haackian theme'. We argue that this can be done in a satisfactory way, when seen according to the classicist's light.  Second, and related to the first problem, there is a complication arising from the fact that the proof system advanced may be made sense of by advancing at least four such different and incompatible readings --- a sharpening of the so-called `Carnap problem'. We make explicit how the problems connect with each other precisely and argue that what results is a kind of underdetermination by the deductive apparatus for the system.
\end{abstract}

\section{Introduction}

By its very nature and purpose, a non-classical system of logic is a system that deviates from classical standards on some regards. Most of us believe we can make some sense of what classical connectives mean, and of what classical logical consequence means. Given that for a long time now classical logic has set the standards for the understanding of connectives and logical consequence, whenever some non-classical system of logic is advanced, questions concerning the meaning of the connectives, and what the logical consequence relation is telling us, come to the front. One interesting way to address these questions was suggested some time ago by Susan Haack \cite[chap.11]{haack1978}:\footnote{For recent developments, see \cite{OmoriISMVL22,Omori2023-theoria}.} most of the mysteries of at least some non-classical systems disappears if one can advance a reasonable `classical-like' reading of the connectives. A similar classically-oriented story may be told for logical consequence and for the understanding of the truth values assumed. Having such readings accounts for the classicists' intelligibility of such systems, although, it must be recognize, it deprives them of much of their revolutionary character. 

Given that background, in this paper, we shall be concerned with a system of four-valued logic recently introduced by Oleg Grigoriev and Dmitry Zaitsev in \cite{Grigoriev2022basic}, called {\bf CNL$^2_4$}. Our plan is to apply the strategy suggested by Susan Haack which we have just described, henceforth the `Haackian strategy', in order to make it comfortable for classicists; we shall discuss also whether such application helps us in shedding some light on the system. As part of the implementation, we shall highlight the fact that the strategy may be implemented in different, incompatible ways, generating a scenario where meaning is not properly fixed by the system of logic under scrutiny. Although this is not new, the flexibility on meaning allowed by this system raises interesting questions we shall also discuss.


That leads us to the next major philosophical theme we shall concentrate on. The problem we have just touched on regarding meaning flexibility via our attempts to increase the intelligibility here connect to the so-called \textit{Carnap problem}, the fact that a given proof system for a system of logic does not single out \textit{one unique intended interpretation}. As we have indicated, the problem has some very deep ramifications related to the system to be considered here because, as we shall see, given the results to be presented, the same proof system may have at least four different readings. But that is not all yet: those readings are indeed readings of the same ingredients comprising `the' formal semantics, but they are also different enough to suggest that they are instantiations of different approaches to the very understanding of the workings of truth, falsity, the meaning of the connectives and the consequence relation of the underlying system. As a result, they actually seem to count as radically different semantic understandings for the same deductive system. The outcome of this scenario is that two persons using the same system may be having radically different understandings of the references of the logical apparatus, without disagreeing on what follows from what in the system.


As we have already mentioned, on our way to address the problem of the meaning of the connectives and the intelligibility of the distinct readings proposed for the proof theory, we shall provide for two possible ways to endow {\bf CNL$^2_4$} with a more or less classical reading. The first one is obtained by providing for direct re-readings of the truth values of the original four-valued system. The second one, which will actually instantiate the Haackian strategy, is directly related to a reformulation of the semantics in terms of relational semantics (or Dunn semantics). That will make completely explicit the use of the two classical truth values, and will also illustrate more clearly the different possible readings available in classical terms. As an additional resource for the classicist, given that negation is one of the most controversial connectives, we appeal to functional completeness and to the definability of classical negation inside the system. So, in a sense, the classical logician can gain intelligibility of the working of the system by appeal to a classical behavior that is also available in {\bf CNL$^2_4$}.


The rest of the paper is structured as follows. After a brief preliminaries in \S\ref{sec:preliminaries} recalling the four-valued semantics for {\bf CNL$^2_4$} explored by Grigoriev and Zaitsev, we add some basic results in \S\ref{sec:basic-results}. Building on these results, we turn to the theme from Carnap in \S\ref{sec:Carnap}. This will be followed by \S\ref{sec:Haack} in which we discuss matters in light of the theme from Haack. We shall add some further reflections in \S\ref{sec:reflections}, and the paper will be concluded by \S\ref{sec:concluding-remarks} with some brief final remarks.

\section{Preliminaries}\label{sec:preliminaries}
The language $\mathcal{L}$ consists of a set $\{ \Not , \land , \lor  \}$ of propositional connectives and a countable set $\mathsf{Prop}$ of propositional variables which we denote by $p$, $q$, etc. We denote by $\mathsf{Form}$ the set of formulas defined as usual in $\mathcal{L}$. We denote a formula of $\mathcal{L}$ by $A$, $B$, $C$, etc. and a set of formulas of $\mathcal{L}$ by $\Gamma$, $\Delta$, $\Sigma$, etc. 

Let us now recall the semantics introduced in \cite{Grigoriev2022basic}. For the purpose of this article, we will slightly change the notation to keep the values free of intuitive readings.

\begin{definition}[Grigoriev \& Zaitsev]
A {\bf CNL$^2_4$}-interpretation of $\mathcal{L}$ is a function $I$ from $\mathsf{Prop}$ to $\{ \mathbf{1}, \mathbf{i}, \mathbf{j}, \mathbf{0}\}$. Given a {\bf CNL$^2_4$}-interpretation $I$, this is extended to a valuation $V$ that assigns every formula a truth value by truth functions depicted in the form of truth tables as follows:

\begin{center}
{\small
\begin{tabular}{c|c}
$A$ & $\Not A$ \\
\hline
$\mathbf{1}$ & $\mathbf{i}$ \\
$\mathbf{i}$ & $\mathbf{0}$ \\
$\mathbf{j}$ & $\mathbf{1}$ \\
$\mathbf{0}$ & $\mathbf{j}$ \\
\end{tabular}
\qquad 
\begin{tabular}{c|cccc}
$A \land B$ & $\mathbf{1}$ & $\mathbf{i}$ & $\mathbf{j}$ & $\mathbf{0}$ \\ \hline
$\mathbf{1}$ & $\mathbf{1}$ & $\mathbf{i}$ & $\mathbf{j}$ & $\mathbf{0}$\\
$\mathbf{i}$ & $\mathbf{i}$ & $\mathbf{i}$ & $\mathbf{0}$ & $\mathbf{0}$\\
$\mathbf{j}$ & $\mathbf{j}$ & $\mathbf{0}$ & $\mathbf{j}$ & $\mathbf{0}$\\
$\mathbf{0}$ & $\mathbf{0}$ & $\mathbf{0}$ & $\mathbf{0}$ & $\mathbf{0}$\\
\end{tabular}
\qquad
\begin{tabular}{c|cccc}
$A \lor B$ & $\mathbf{1}$ & $\mathbf{i}$ & $\mathbf{j}$ & $\mathbf{0}$ \\ \hline
$\mathbf{1}$ & $\mathbf{1}$ & $\mathbf{1}$ & $\mathbf{1}$ & $\mathbf{1}$\\
$\mathbf{i}$ & $\mathbf{1}$ & $\mathbf{i}$ & $\mathbf{1}$ & $\mathbf{i}$\\
$\mathbf{j}$ & $\mathbf{1}$ & $\mathbf{1}$ & $\mathbf{j}$ & $\mathbf{j}$\\
$\mathbf{0}$ & $\mathbf{1}$ & $\mathbf{i}$ & $\mathbf{j}$ & $\mathbf{0}$
\end{tabular}
}
\end{center}
\end{definition}

\begin{definition}
For all $\Gamma\cup \{ A \}\subseteq \mathsf{Form}$, $\Gamma\models_{\bf CNL^2_4} A$ iff for all {\bf CNL$^2_4$}-interpretations $I$, $V(A)\in \mathcal{D}$ if $V(B)\in \mathcal{D}$ for all $B\in \Gamma$ where  $\mathcal{D}=\{ \mathbf{1}, \mathbf{i} \}$.
\end{definition}

\begin{remark}
Note that Grigoriev and Zaitsev also consider another four-valued logic called {\bf CNLL$^2_4$} in which four values are \emph{linearly ordered}. We shall not, however, consider the other system since there are already plenty of topics to discuss for {\bf CNL$^2_4$}.
\end{remark}

\section{Basic observations}\label{sec:basic-results}

\subsection{An alternative proof system}
In \cite{Grigoriev2022basic}, a binary proof system is defined by Grigoriev and Zaitsev, but here we will present a natural deduction system.

\begin{definition}
The natural deduction rules $\mathcal{R}_{\bf CNL^2_4}$ for {\bf CNL$^2_4$} are all the following rules:
$$
\infer{A {\wedge}B}{A & B}
\quad
\infer{A}{ A {\wedge}B }
\quad
\infer{B}{ A {\wedge}B }
\quad 
\infer{A {\vee}B}{A}
\quad
\infer{A {\vee}B}{B}
\quad
\infer{C}
{
A {\vee}B
&
\infer*{C}{[A]}
&
\infer*{C}{[B]}
}
\qquad
\infer[(\Not\Not 1)]{B}{A \quad \Not\Not A}
\qquad
\infer[(\Not\Not 2)]{ A{\vee}\Not\Not A}{}
$$
\[
\infer{\Not (A {\land}B)}{\Not A & \Not B}
\qquad
\infer{\Not A}{ \Not (A {\land}B) }
\qquad
\infer{\Not B}{ \Not (A {\land}B) }
\qquad
\infer{\Not (A {\lor}B)}{\Not A}
\qquad
\infer{\Not (A {\lor}B)}{\Not B}
\qquad
\infer{C}
{
\Not (A {\lor}B)
&
\infer*{C}{[\Not A]}
&
\infer*{C}{[\Not B]}
}
\]
Based on these, given any set $\Sigma \cup \{ A \}$ of formulas, $\Sigma \vdash A$ iff for some finite $\Sigma'\subseteq \Sigma$, there is a derivation of $A$ from $\Sigma'$ in the calculus whose rule set is $\mathcal{R}_{\bf CNL^2_4}$.
\end{definition}

Then, the soundness direction is tedious, but standard, so we only state it without a proof.

\begin{theorem}[Soundness]
For all $\Gamma\cup \{ A \}\subseteq \mathsf{Form}$, $\Gamma\vdash A$ only if $\Gamma\models_{\bf CNL^2_4}A$.
\end{theorem}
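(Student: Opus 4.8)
The plan is a routine induction on the structure of natural deduction derivations, preceded by the usual reduction to the finitary case: since $\Gamma\vdash A$ means that $A$ is derivable from some finite $\Sigma'\subseteq\Gamma$, and since $\models_{\bf CNL^2_4}$ is monotone in the premise set, it suffices to show that whenever there is a derivation $\Pi$ of $A$ from a finite set $\Sigma'$ of undischarged assumptions, one has $\Sigma'\models_{\bf CNL^2_4}A$. I would prove this by induction on the height of $\Pi$, establishing the slightly sharper statement: for every {\bf CNL$^2_4$}-interpretation with associated valuation $V$, if $V(B)\in\mathcal{D}$ for every undischarged assumption $B$ of $\Pi$, then $V(A)\in\mathcal{D}$. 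The base case, a one-line derivation of $A$ from $\{A\}$, is immediate; the inductive step is a case analysis on the last rule applied.

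To make the inductive step mechanical I would first record a short list of facts about the tables, each verified by finite inspection: (i) $V(A\wedge B)\in\mathcal{D}$ iff $V(A)\in\mathcal{D}$ and $V(B)\in\mathcal{D}$; (ii) $V(A\vee B)\in\mathcal{D}$ iff $V(A)\in\mathcal{D}$ or $V(B)\in\mathcal{D}$; (iii) writing $T=\{\mathbf{1},\mathbf{j}\}$, one has $V(\Not A)\in\mathcal{D}$ iff $V(A)\in T$, and moreover $V(A\wedge B)\in T$ iff $V(A)\in T$ and $V(B)\in T$, and $V(A\vee B)\in T$ iff $V(A)\in T$ or $V(B)\in T$; (iv) $\Not\Not\mathbf{1}=\mathbf{0}$ and $\Not\Not\mathbf{i}=\mathbf{j}$, so $V(A)\in\mathcal{D}$ implies $V(\Not\Not A)\notin\mathcal{D}$; (v) $V(A\vee\Not\Not A)=\mathbf{1}$ for all $A$ and $V$. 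Given (i)--(v), the rules split into four groups. The $\wedge$- and $\vee$-rules (including the two that discharge assumptions) are handled by (i) and (ii) using the induction hypothesis on the immediate subderivations, exactly as in the soundness argument for positive natural deduction, the elimination-by-cases rule requiring a case split justified by (ii). The six rules for $\Not$ applied to a conjunction or a disjunction reduce via (iii) to the very same primality statements but with $T$ in place of $\mathcal{D}$; for instance, from $\Not A$ and $\Not B$ the induction hypothesis gives $V(A),V(B)\in T$, hence $V(A\wedge B)\in T$ and so $V(\Not(A\wedge B))\in\mathcal{D}$, and dually for the others. The rule $(\Not\Not 1)$ is the sole explosive rule: by (iv) no valuation makes both $A$ and $\Not\Not A$ designated, so the implication to be checked holds vacuously. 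Finally $(\Not\Not 2)$ is immediate from (v).

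There is no genuine obstacle here; the only points calling for a little care are the standard bookkeeping of undischarged assumptions in the two rules that discharge hypotheses (one must remember that a discharged assumption no longer needs to be designated), and the fact that $(\Not\Not 1)$ must be read as an \emph{ex contradictione} principle whose soundness is vacuous rather than as a detachment rule. As a sanity check, and as an explanation of why (i)--(v) all hold, one may identify $\mathbf{1},\mathbf{i},\mathbf{j},\mathbf{0}$ with the pairs $(t,t),(t,f),(f,t),(f,f)$ over the two-element Boolean algebra: then $\wedge$ and $\vee$ act coordinatewise, $\mathcal{D}$ is the set of pairs with first coordinate $t$, $T$ the set with second coordinate $t$, and $\Not(x,y)=(y,\neg x)$, whence $\Not\Not(x,y)=(\neg x,\neg y)$; from this presentation all of (i)--(v) are transparent, and the proof organizes itself accordingly.
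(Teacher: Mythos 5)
Your proof is correct, and it is precisely the ``tedious, but standard'' induction on derivations that the paper omits: all of your facts (i)--(v) check against the tables, including the vacuous soundness of $(\Not\Not 1)$ (no valuation designates both $A$ and $\Not\Not A$) and the constant value $\mathbf{1}$ of $A\vee\Not\Not A$, which is exactly what the paper records in its Proposition~1. Your closing encoding of $\mathbf{1},\mathbf{i},\mathbf{j},\mathbf{0}$ as pairs over the two-element Boolean algebra, with $\Not(x,y)=(y,\neg x)$, is the relational (Dunn-style) reading the authors develop in \S 4 (Option~1), so your organization of the argument is fully consonant with theirs.
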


For the completeness direction, we prepare some well known notions and lemmas.
\begin{definition}
Let $\Sigma$ be a set of formulas. Then, 
$\Sigma$ is a \emph{theory} iff $\Sigma \vdash A$ implies $A \in\Sigma$, and $\Sigma$ is \emph{prime} iff $A{\vee}B \in\Sigma$ implies $A\in\Sigma$ or $B\in\Sigma$.
\end{definition}

\begin{lemma}[Lindenbaum]\label{lem:Lindenbaum}
If $\Sigma\not\vdash A$, then there is $\Sigma'\supseteq \Sigma$ such that $\Sigma'\not\vdash A$ and $\Sigma'$ is a prime theory.
\end{lemma}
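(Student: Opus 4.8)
The plan is to run the usual Lindenbaum construction, enumerating all formulas and greedily adding those whose addition still does not allow a derivation of $A$. Fix an enumeration $B_0,B_1,B_2,\dots$ of $\mathsf{Form}$ (available since $\mathsf{Prop}$ is countable), set $\Sigma_0:=\Sigma$, and recursively let $\Sigma_{n+1}:=\Sigma_n\cup\{B_n\}$ if $\Sigma_n\cup\{B_n\}\not\vdash A$ and $\Sigma_{n+1}:=\Sigma_n$ otherwise; then put $\Sigma':=\bigcup_n\Sigma_n$. Clearly $\Sigma\subseteq\Sigma'$, and one checks by induction on $n$ that $\Sigma_n\not\vdash A$ for every $n$ (base case: the hypothesis; step: immediate from the case split). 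Since any derivation uses only finitely many premises, it lies inside some $\Sigma_n$, so $\Sigma'\not\vdash A$.

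Next I would verify that $\Sigma'$ is a theory. Suppose $\Sigma'\vdash B$ but $B\notin\Sigma'$, say $B=B_n$. Then $B_n\notin\Sigma_{n+1}$, so $\Sigma_n\cup\{B_n\}\vdash A$; as $\Sigma_n\subseteq\Sigma'$ and $\Sigma'\vdash B_n$, composing derivations (plugging the derivation of $B_n$ from $\Sigma'$ into each leaf labelled $B_n$) yields $\Sigma'\vdash A$, contradicting the previous paragraph. In particular $A\notin\Sigma'$ as well.

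The crux is primeness. Suppose $B\vee C\in\Sigma'$ but $B\notin\Sigma'$ and $C\notin\Sigma'$. Since $\Sigma'$ is a theory, $\Sigma'\not\vdash B$ and $\Sigma'\not\vdash C$; looking at the stages at which $B$ and $C$ were considered, and using monotonicity, this forces $\Sigma'\cup\{B\}\vdash A$ and $\Sigma'\cup\{C\}\vdash A$, each witnessed by a derivation from $B$ (resp.\ $C$) together with finitely many members of $\Sigma'$. Now I would assemble a single derivation: from the assumption $B\vee C$ (legitimate because $B\vee C\in\Sigma'$) together with the two subderivations just obtained, the disjunction-elimination rule discharges the assumptions $[B]$ and $[C]$ and produces $A$ from $\Sigma'$, contradicting $\Sigma'\not\vdash A$. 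Hence $\Sigma'$ is prime, and it is a prime theory with $\Sigma\subseteq\Sigma'$ and $\Sigma'\not\vdash A$, as required.

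The parts needing care — and what I would flag as the main obstacle — are the structural properties of $\vdash$ that the argument uses silently: monotonicity/weakening (immediate, since a derivation from $\Sigma_n$ is a derivation from any superset) and the admissibility of composing derivations (the natural-deduction analogue of cut), on which both deductive closure and primeness rest; I would state this composition property explicitly. Beyond that, the only rules of $\mathcal{R}_{\bf CNL^2_4}$ actually invoked are $\vee$-introduction and $\vee$-elimination, so no facts about $\Not$ or $\land$ are needed.
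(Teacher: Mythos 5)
Your proof is correct: it is the standard Lindenbaum construction (greedy enumeration, deductive closure via composing derivations, primeness via the $\vee$-elimination rule of $\mathcal{R}_{\bf CNL^2_4}$), and the structural properties you rightly flag --- monotonicity and the cut-like grafting of a derivation of $B_n$ onto leaves labelled $B_n$ --- do hold for the paper's definition of $\vdash$, which already builds in finiteness of the premise set. The paper states this lemma without proof as a well-known fact, so there is no alternative argument to compare against; yours is exactly the construction the authors implicitly rely on.
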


We now define the canonical valuation in the following manner.
\begin{definition}\label{def:canonical-valuation1}
For any $\Sigma\subseteq \mathsf{Form}$, let $v_\Sigma$ from $\mathsf{Prop}$ to $\{\mathbf{1}, \mathbf{i}, \mathbf{j}, \mathbf{0}\}$ be defined as follows:

{\small
\begin{center} 
$v_\Sigma (p):=
\begin{cases}
\mathbf{1} \text{ iff } \Sigma\vdash p \text{ and } \Sigma\vdash \Not p; \\
\mathbf{i}  \text{ iff } \Sigma\vdash p \text{ and } \Sigma\not\vdash \Not p; \\
\mathbf{j}  \text{ iff } \Sigma\not\vdash p \text{ and } \Sigma\vdash \Not p;\\
\mathbf{0} \text{ iff } \Sigma\not\vdash p \text{ and } \Sigma\not\vdash \Not p.
\end{cases}
$
\end{center}
}
\end{definition}

\begin{remark}
Note that the above definition is different from the more familiar definition when the four values are understood as in {\bf FDE}.
\end{remark}

The following lemma is the key for the completeness result.
\begin{lemma}\label{lem:canonical-valuation1} 
If $\Sigma$ is a prime theory, then the following hold for all $B\in\mathsf{Form}$.

{\small
\begin{center}
$v_\Sigma (B)=
\begin{cases}
\mathbf{1} \text{ iff } \Sigma\vdash B \text{ and } \Sigma\vdash \Not B;\\
\mathbf{i}  \text{ iff } \Sigma\vdash B \text{ and } \Sigma\not\vdash \Not B; \\
\mathbf{j}  \text{ iff } \Sigma\not\vdash B \text{ and } \Sigma\vdash \Not B; \\ 
\mathbf{0} \text{ iff } \Sigma\not\vdash B \text{ and } \Sigma\not\vdash \Not B.
\end{cases}
$
\end{center}
}
\end{lemma}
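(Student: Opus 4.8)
The plan is to argue by induction on the complexity of $B$, writing $v_\Sigma$ also for the valuation it induces through the truth tables. Since the four conditions on the right are pairwise incompatible and jointly exhaustive, it suffices to prove the two biconditionals ``$v_\Sigma(B)\in\{\mathbf{1},\mathbf{i}\}$ iff $\Sigma\vdash B$'' and ``$v_\Sigma(B)\in\{\mathbf{1},\mathbf{j}\}$ iff $\Sigma\vdash\Not B$'', since then which of these two holds determines $v_\Sigma(B)$ uniquely. A useful orientation, which turns the computations in each step into routine checks, is that $\land$ and $\lor$ make $\{\mathbf{1},\mathbf{i},\mathbf{j},\mathbf{0}\}$ the four-element lattice with top $\mathbf{1}$, bottom $\mathbf{0}$, and incomparable $\mathbf{i},\mathbf{j}$: identifying $\mathbf{1},\mathbf{i},\mathbf{j},\mathbf{0}$ with $\langle 1,1\rangle,\langle 1,0\rangle,\langle 0,1\rangle,\langle 0,0\rangle$, conjunction is coordinatewise minimum, disjunction coordinatewise maximum, $\Not\langle x,y\rangle=\langle y,1{-}x\rangle$, the first coordinate records designation, and on propositional variables the first coordinate of $v_\Sigma$ records ``$\Sigma$ proves it'' while the second records ``$\Sigma$ proves its negation''. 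The base case $B=p$ is then exactly Definition \ref{def:canonical-valuation1}.

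For $B=C\land D$ I would feed the induction hypotheses for $C$ and $D$ into the coordinatewise action of $\land$, reducing the two biconditionals to ``$\Sigma\vdash C\land D$ iff $\Sigma\vdash C$ and $\Sigma\vdash D$'' and ``$\Sigma\vdash\Not(C\land D)$ iff $\Sigma\vdash\Not C$ and $\Sigma\vdash\Not D$''; these are immediate from the $\land$-introduction and $\land$-elimination rules and from the three rules governing $\Not(A\land B)$, respectively. The case $B=C\lor D$ is parallel, with one difference: the left-to-right directions now genuinely use that $\Sigma$ is a \emph{prime theory}. From $\Sigma\vdash C\lor D$ one gets $C\lor D\in\Sigma$, hence $C\in\Sigma$ or $D\in\Sigma$; and from $\Sigma\vdash\Not(C\lor D)$ one first derives $\Not C\lor\Not D$ using the discharging elimination rule for $\Not(A\lor B)$ together with $\lor$-introduction, and then invokes theoryhood and primeness again. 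The converse directions come from the relevant introduction rules.

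I expect the negation step $B=\Not C$ to be the crux. Reading off the $\Not$-table, $v_\Sigma(\Not C)\in\{\mathbf{1},\mathbf{i}\}$ exactly when $v_\Sigma(C)\in\{\mathbf{1},\mathbf{j}\}$, which by the induction hypothesis is exactly when $\Sigma\vdash\Not C$; so the first biconditional for $\Not C$ is automatic. Likewise $v_\Sigma(\Not C)\in\{\mathbf{1},\mathbf{j}\}$ exactly when $v_\Sigma(C)\notin\{\mathbf{1},\mathbf{i}\}$, i.e. by the induction hypothesis exactly when $\Sigma\not\vdash C$; so the whole lemma comes down to the equivalence $\Sigma\vdash\Not\Not C$ iff $\Sigma\not\vdash C$. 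This is precisely where the two non-standard rules do their work. Right to left: by $(\Not\Not 2)$ we have $\vdash C\lor\Not\Not C$, so $C\lor\Not\Not C\in\Sigma$, so by primeness $C\in\Sigma$ or $\Not\Not C\in\Sigma$, whence $\Sigma\not\vdash C$ forces $\Sigma\vdash\Not\Not C$. Left to right: if $\Sigma\vdash C$ and $\Sigma\vdash\Not\Not C$ both held, $(\Not\Not 1)$ would let $\Sigma$ derive every formula. Hence the argument needs $\Sigma$ to be non-trivial; this is harmless, since the only prime theory to which the lemma gets applied is the one supplied by Lemma \ref{lem:Lindenbaum}, and that one fails to prove some formula. (Strictly, the displayed equivalences break for the trivial prime theory $\mathsf{Form}$, where $v_\Sigma(p)=\mathbf{1}$ but the induced valuation gives $v_\Sigma(\Not p)=\mathbf{i}$, so ``prime theory'' should here be read as ``non-trivial prime theory''.)
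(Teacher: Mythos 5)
Your proof is correct, and at its core it uses exactly the same ingredients as the paper's: induction on the structure of $B$; the introduction/elimination rules for $\land$, $\lor$ and their negated forms together with primeness for the binary cases; and the reduction of the negation case to the equivalence ``$\Sigma\vdash\Not\Not C$ iff $\Sigma\not\vdash C$'', established from $(\Not\Not 2)$ plus primeness in one direction and $(\Not\Not 1)$ in the other. What you do differently is organizational: the paper runs a four-way case analysis on the value of each compound (which for $\land$ and $\lor$ forces unpleasant disjunctions of subcases), whereas you factor the statement into the two biconditionals ``designated iff provable'' and ``image under $\Not$ designated iff negation provable'', and observe that under the pairing $\mathbf{1},\mathbf{i},\mathbf{j},\mathbf{0}\mapsto\langle 1,1\rangle,\langle 1,0\rangle,\langle 0,1\rangle,\langle 0,0\rangle$ the tables become coordinatewise min/max and $\langle x,y\rangle\mapsto\langle y,1{-}x\rangle$. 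This buys shorter, essentially mechanical inductive steps at the cost of a small up-front verification of the coordinatization (which I have checked and is right). Your caveat about non-triviality is also a genuine catch: the paper's own right-to-left step in the negation case (from $\Sigma\vdash\Not B$ and $\Sigma\vdash\Not\Not B$ to $\Sigma\not\vdash B$, ``by $(\Not\Not 1)$'') silently assumes that $\Sigma$ is non-trivial, and the lemma as stated is indeed false for the trivial prime theory $\mathsf{Form}$; as you say, this is harmless, since the only prime theory the completeness proof feeds into the lemma is one that fails to prove some formula.
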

\begin{proof}
Note first that the well-definedness of $v_\Sigma$ is obvious. Then the desired result is proved by induction on the construction of $B$. The base case, for atomic formulas, is obvious by the definition. For the induction step, the cases are split based on the connectives. The details are spelled out in the Appendix.
\end{proof}

We are now ready to prove the completeness result.
\begin{theorem}[Completeness]\label{thm:completeness-for-unilateral}
For all $\Gamma\cup \{ A \}\subseteq \mathsf{Form}$, $\Gamma\models_{\bf CNL^2_4} A$ only if $\Gamma\vdash A$.
\end{theorem}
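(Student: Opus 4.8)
The plan is to prove the contrapositive: if $\Gamma \not\vdash A$, then $\Gamma \not\models_{\bf CNL^2_4} A$. So suppose $\Gamma \not\vdash A$. First I would apply the Lindenbaum Lemma (Lemma~\ref{lem:Lindenbaum}) to extend $\Gamma$ to a prime theory $\Sigma \supseteq \Gamma$ with $\Sigma \not\vdash A$. Then I would take the canonical valuation $v_\Sigma$ from Definition~\ref{def:canonical-valuation1} as the interpretation $I$, and let $V$ be its extension to all of $\mathsf{Form}$. The crucial point is that Lemma~\ref{lem:canonical-valuation1} guarantees that $V(B) = v_\Sigma(B)$ behaves on every formula exactly as the four cases describe, i.e. $V$ really is the valuation determined by the truth tables when starting from $v_\Sigma$ on atoms.

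The next step is to check that $V$ refutes the consequence $\Gamma \models_{\bf CNL^2_4} A$. For every $B \in \Gamma$ we have $\Sigma \vdash B$ (since $\Sigma \supseteq \Gamma$ and $\Sigma$ is a theory, or simply since $\Gamma \subseteq \Sigma$ and $\Sigma$ is closed under $\vdash$), so by Lemma~\ref{lem:canonical-valuation1} either $V(B) = \mathbf{1}$ or $V(B) = \mathbf{i}$; in both cases $V(B) \in \mathcal{D}$. On the other hand, $\Sigma \not\vdash A$, so again by Lemma~\ref{lem:canonical-valuation1} we get $V(A) \in \{\mathbf{j}, \mathbf{0}\}$, hence $V(A) \notin \mathcal{D}$. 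Thus $I$ witnesses $\Gamma \not\models_{\bf CNL^2_4} A$, which is what we needed.

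Essentially all the real work has been pushed into the two lemmas: Lemma~\ref{lem:Lindenbaum} (a routine Lindenbaum-style construction, using primeness, which goes through because $\vee$-elimination is available) and, more substantively, Lemma~\ref{lem:canonical-valuation1}, whose inductive step the excerpt defers to the Appendix. So the main obstacle is not in the completeness argument itself, which is a short assembly once the lemmas are in hand, but rather in verifying Lemma~\ref{lem:canonical-valuation1}: one must check, connective by connective, that the derivability facts about $B \wedge C$, $B \vee C$, and $\Not B$ (including the doubled-negation and pushed-in-negation rules such as $(\Not\Not 1)$, $(\Not\Not 2)$, and the De Morgan-style rules for $\Not(A\wedge B)$ and $\Not(A\vee B)$) match the entries of the given truth tables under the four-way classification by $\Sigma \vdash (\cdot)$ and $\Sigma \vdash \Not(\cdot)$. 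The slightly unusual feature here — flagged in the Remark — is that the coding of the four values is \emph{not} the familiar {\bf FDE} one, so the negation table in particular ($\mathbf{1}\mapsto\mathbf{i}$, $\mathbf{i}\mapsto\mathbf{0}$, $\mathbf{j}\mapsto\mathbf{1}$, $\mathbf{0}\mapsto\mathbf{j}$) has to be matched against derivability of iterated negations with some care; but once that bookkeeping is done, the completeness theorem follows immediately as above.
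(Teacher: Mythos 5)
Your proof is correct and takes essentially the same route as the paper's: a Lindenbaum extension to a prime theory $\Sigma$, followed by the canonical-valuation lemma, with designation in $\mathcal{D}=\{\mathbf{1},\mathbf{i}\}$ read off directly from $\Sigma\vdash(\cdot)$. You in fact spell out the final bookkeeping (premises land in $\mathcal{D}$, the conclusion does not) a bit more explicitly than the paper's own very terse version.
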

\begin{proof}
Assume $\Gamma\not\vdash A$. Then, by Lemma~\ref{lem:Lindenbaum}, there is a $\Sigma\supseteq \Gamma$ such that $\Sigma$ is a prime theory and $A\not\in\Sigma$, and by Lemma~\ref{lem:canonical-valuation1}, a four-valued valuation $v_{\Sigma}$ can be defined with $I_{\Sigma}(B)\in\mathcal{D}$ for every $B\in\Gamma$ and $I_{\Sigma}(A)\not\in\mathcal{D}$. Thus it follows that $\Gamma \not\models_{\bf CNL^2_4} A$, as desired.
\end{proof}

\subsection{Functional completeness}
We now turn to show that the matrix that characterizes the system is functionally complete. To this end, we will first introduce some related notions.

\begin{definition}[Functional completeness]
An algebra $\mathfrak{A}=\langle A, f_1, \dots , f_n \rangle$, is said to be \emph{functionally complete} provided that every finitary function ${f\!:}$ $A^m \to A$ is definable by compositions of the functions $f_1, \dots, f_n$ alone. A matrix $\langle \mathfrak{A}, {\mathcal D} \rangle$ is \emph{functionally complete} if $\mathfrak{A}$ is functionally complete.
\end{definition}

\begin{definition}[Definitional completeness]
A logic {\bf L} is \emph{definitionally complete} if there exists a functionally complete matrix that is strongly sound and complete for $L$.
\end{definition}

For the characterization of the functional completeness, the following theorem of Jerzy S{\l}upecki is elegant and useful. In order to state the result, we need the following definition.

\begin{definition}\label{def:essen.binary}
Let $\mathfrak{A}$ $=$ $\langle A, f_1, \dots , f_n \rangle$ be an algebra, and $f$ be a binary operation defined in $\mathfrak{A}$. Then, $f$ is \emph{unary reducible} iff for some unary operation $g$ definable in $\mathfrak{A}$, $f(x, y)=g(x)$ for all $x, y\in A$ or $f(x, y)=g(y)$ for all $x, y\in A$. And $f$ is \emph{essentially binary} if $f$ is not unary reducible.
\end{definition}

\begin{theorem}[S{\l}upecki, \cite{Slupecki1972}]\label{thm:Slupecki}
$\mathfrak{A}$ $= \langle \langle {\mathcal V}, f_1, \dots , f_n \rangle, {\mathcal D}\rangle$ $( |\mathcal{V}|\geq 3)$ is functionally complete iff in\linebreak $\langle {\mathcal V}, f_1, \dots , f_n \rangle$
{\rm (1)} all unary functions on $\mathcal{V}$ are definable, and {\rm (2)} at least one surjective and essentially binary function on $\mathcal{V}$ is definable.
\end{theorem}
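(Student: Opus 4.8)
The plan is to work with the clone $\mathcal{C}$ of all operations on $\mathcal{V}$ definable by compositions of $f_1,\dots,f_n$ (the designated set $\mathcal{D}$ plays no role, by the definition of functional completeness of a matrix) and to show that $\mathcal{C}$ is the clone of \emph{all} finitary operations on $\mathcal{V}$ exactly when (1) and (2) hold. The forward direction is immediate: if $\mathcal{C}$ contains every operation, then in particular it contains every unary one, which is (1); and it contains some surjective and essentially binary operation, since on any set with at least two elements there is one --- e.g. any quasigroup (Latin-square) operation is onto and, having no constant row or column, is not unary reducible --- which is (2). So the content is the converse: from (1) and the presence in $\mathcal{C}$ of a single surjective essentially binary $f$, derive that every finitary operation lies in $\mathcal{C}$.

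For the converse the first move is to reduce the task to a single binary operation. I claim it suffices to put into $\mathcal{C}$, for some element $0\in\mathcal{V}$, the binary ``selector''
\[
g_0(x,y)=\begin{cases} y & \text{if } x=0,\\ x & \text{if } x\neq 0,\end{cases}
\]
since $g_0$ together with all the unary operations generates everything. Indeed, fixing $1\in\mathcal{V}\setminus\{0\}$ and a unary $\nu$ with $\nu(0)=1$, $\nu(1)=0$, the term $\nu(g_0(\nu(x),\nu(y)))$ computes Boolean conjunction on $\{0,1\}$, so $\mathcal{C}$ restricted to $\{0,1\}$-valued arguments contains a Boolean-complete family; and $g_0(0,y)=y$, $g_0(x,0)=x$ for all $x,y$, so $g_0$ behaves as an ``overlay''. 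Given now an arbitrary $h\colon\mathcal{V}^m\to\mathcal{V}$, for each $\vec{a}\in\mathcal{V}^m$ take unary operations $\chi_{a_i}$ (sending $a_i$ to $1$ and all other elements to $0$) and $p_{h(\vec{a})}$ (sending $1$ to $h(\vec{a})$ and $0$ to $0$); then $\mathrm{eq}_{\vec{a}}(\vec{x})$, obtained by applying the $m$-ary Boolean conjunction to $\chi_{a_1}(x_1),\dots,\chi_{a_m}(x_m)$, equals $1$ at $\vec{x}=\vec{a}$ and $0$ elsewhere, and iterating $g_0$ over all $\vec{a}$ on the terms $p_{h(\vec{a})}(\mathrm{eq}_{\vec{a}}(\vec{x}))$ reproduces $h$: at each point at most one of these terms is nonzero, and if one is then it equals $h(\vec{x})$, and $g_0$ carries that value (or $0$) out through any association. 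Hence $h\in\mathcal{C}$. This reduction is routine and uses only $|\mathcal{V}|\geq 2$.

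The real work --- and the sole point where $|\mathcal{V}|\geq 3$ enters --- is defining such a $g_0$ from $f$ and the unary operations. The naive attempt $g_0(x,y)=w(f(u(x),v(y)))$ with $u,v,w$ unary does not work in general: it succeeds only when some row of $f$ becomes the identity while the other rows become constant after renaming (or the analogous condition with rows replaced by columns), which an arbitrary surjective essentially binary $f$ need not satisfy. What is needed instead is a genuine combinatorial analysis of $f$ --- essential binariness yields arguments at which $f$ really depends on each input, surjectivity realises every element of $\mathcal{V}$ as a value of $f$, and the full stock of unary operations (notably maps collapsing $\mathcal{V}$ onto suitably chosen two- and three-element subsets) is deployed to carve an operation of the form of $g_0$ out of $f$ by possibly iterated composition, the element playing the role of $0$ being fixed in the course of the construction. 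This is where I expect the difficulty to lie, and $|\mathcal{V}|\geq 3$ is not negotiable: on a two-element set, the XOR operation together with \emph{all} unary operations generates only the clone of affine functions, so ``at least one surjective essentially binary operation is definable'' would not by itself force functional completeness there. I would carry out this final extraction along the lines of S{\l}upecki's original argument \cite{Slupecki1972}; it is the one non-mechanical step of the proof.
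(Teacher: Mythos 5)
First, a point of reference: the paper does not prove this statement at all --- it is imported verbatim from S{\l}upecki's 1972 paper as a known tool (the label ``[S{\l}upecki, \cite{Slupecki1972}]'' is the whole of the paper's justification). So there is no internal proof to measure your attempt against; it can only be judged on its own terms.

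Judged that way, your proposal has a genuine gap, and you have located it yourself. The forward direction is fine (a quasigroup operation is indeed surjective and essentially binary, so (2) is no obstacle), and the reduction of completeness to the definability of the selector $g_0(x,y)$ --- equal to $y$ when $x=0$ and to $x$ otherwise --- given all unary operations is correct and routine: your $\nu(g_0(\nu(x),\nu(y)))$ does compute $\min$ on $\{0,1\}$, your terms $p_{h(\vec a)}(\mathrm{eq}_{\vec a}(\vec x))$ vanish off $\vec a$, and the iterated $g_0$-overlay returns the unique nonzero term (or $0$, which is then the right value) at each point, so it reproduces $h$. Your observation that $|\mathcal{V}|\geq 3$ is essential --- XOR plus all unary maps on a two-element set stays inside the affine clone --- is also correct and worth making. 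But the step you defer, namely extracting such a $g_0$ (or any other complete binary operation) from an \emph{arbitrary} surjective essentially binary $f$ by composing with unary maps, is not a finishing detail: it is the entire content of S{\l}upecki's theorem. Everything you do prove is the easy standard part; the paragraph that says ``I would carry out this final extraction along the lines of S{\l}upecki's original argument'' replaces the theorem by a citation to itself. As it stands the proposal is an accurate road map with a correct outer shell, not a proof; to complete it you would need the combinatorial case analysis on the rows and columns of $f$ (using surjectivity to hit all of $\mathcal{V}$ and essential binariness to find witnesses of genuine dependence on each coordinate, then collapsing $\mathcal{V}$ onto well-chosen two- and three-element subsets by unary maps) that occupies the body of \cite{Slupecki1972}.
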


This elegant characterization by S{\l}upecki can be simplified even further in case of expansions of the algebra related to {\bf FDE} (cf. \cite[Theorem 4.8]{HH2018contra}).

\begin{theorem}\label{thm:f.c.}
Given any expansion $\mathcal{F}$ of the algebra $\langle \{\mathbf{1}, \mathbf{i}, \mathbf{j}, \mathbf{0}\}, \wedge, {\vee}\rangle$, the following are equivalent: 
\begin{itemize}
\setlength{\parskip}{0cm}
\setlength{\itemsep}{0cm}
\item[{\rm (1)}] $\mathcal{F}$ is functionally complete; 
\item[{\rm (2)}] all of the $\delta_a$s as well as
$\mathsf{C}_a$s {\rm($a\in \{ {\mathbf 1}, {\mathbf i}, {\mathbf j}, {\mathbf 0} \}$)} are definable, where $\delta_a(b){=}\mathbf{1}$, if $a{=}b$, otherwise $\delta_a(b){=}\mathbf{0}$; and $\mathsf{C}_a(b){=}a$, for all $a,b\in\mathcal{V}$.
\end{itemize}
\end{theorem}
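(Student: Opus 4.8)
The plan is to establish the two implications separately, with the direction $(1)\Rightarrow(2)$ being immediate: if $\mathcal{F}$ is functionally complete then, by definition, \emph{every} finitary operation on $\{\mathbf{1},\mathbf{i},\mathbf{j},\mathbf{0}\}$ is a composition of the operations of $\mathcal{F}$, and the $\delta_a$ and the constants $\mathsf{C}_a$ are such operations. So all the content lies in $(2)\Rightarrow(1)$.

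For $(2)\Rightarrow(1)$ I would appeal to S{\l}upecki's criterion (Theorem~\ref{thm:Slupecki}), which for a matrix over at least three values reduces functional completeness to two conditions: (a) every unary operation on $\mathcal{V}=\{\mathbf{1},\mathbf{i},\mathbf{j},\mathbf{0}\}$ is definable, and (b) some surjective, essentially binary operation on $\mathcal{V}$ is definable. Condition (b) is met by $\vee$ itself, which belongs to $\mathcal{F}$ by hypothesis: inspection of its table gives $a\vee a=a$ for every $a\in\mathcal{V}$, so $\vee$ is surjective; and since $\mathbf{0}\vee\mathbf{0}=\mathbf{0}$ whereas $\mathbf{0}\vee\mathbf{1}=\mathbf{1}$ and $\mathbf{1}\vee\mathbf{0}=\mathbf{1}$, the value $\vee(x,y)$ depends genuinely on each of $x$ and $y$, so $\vee$ is not unary reducible, i.e.\ it is essentially binary.

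For condition (a), I would use a ``disjunctive normal form'' built from the hypothesised operations. The tables for $\wedge$ and $\vee$ exhibit $\langle\mathcal{V},\wedge,\vee\rangle$ as the familiar bounded distributive lattice with top $\mathbf{1}$ and bottom $\mathbf{0}$; concretely $\mathbf{1}\wedge c=c$, $\;\mathbf{0}\wedge c=\mathbf{0}$, $\;\mathbf{0}\vee c=c$ for all $c\in\mathcal{V}$, and $\vee$ is commutative and associative. Given any unary $f\colon\mathcal{V}\to\mathcal{V}$, set
$$ t_f(x)\;:=\;\bigvee_{a\in\mathcal{V}}\bigl(\delta_a(x)\wedge\mathsf{C}_{f(a)}(x)\bigr). $$
If $x$ takes the value $b$, the disjunct indexed by $a=b$ evaluates to $\mathbf{1}\wedge f(b)=f(b)$, while every disjunct with $a\neq b$ evaluates to $\mathbf{0}\wedge f(a)=\mathbf{0}$; joining $f(b)$ with copies of $\mathbf{0}$ returns $f(b)$. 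Hence $t_f$ computes $f$, and since $t_f$ is a composition of $\delta_a$, $\mathsf{C}_{f(a)}$ (definable by hypothesis $(2)$) and $\wedge,\vee$, the function $f$ is definable in $\mathcal{F}$. With (a) and (b) established, Theorem~\ref{thm:Slupecki} yields that $\mathcal{F}$ is functionally complete.

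I do not expect a genuine obstacle here; the only points needing attention are the routine table checks that $\mathbf{1}$ and $\mathbf{0}$ act as lattice bounds (so that the normal form collapses as claimed) and the verification of the essential-binarity clause, neither of which uses anything about $\mathcal{F}$ beyond its being an expansion of $\langle\{\mathbf{1},\mathbf{i},\mathbf{j},\mathbf{0}\},\wedge,\vee\rangle$. I note in passing that the same construction, applied coordinatewise with a selector $\delta_{a_1}(x_1)\wedge\cdots\wedge\delta_{a_n}(x_n)$, defines arbitrary $n$-ary operations directly, so $(2)\Rightarrow(1)$ can alternatively be proved without invoking Theorem~\ref{thm:Slupecki} at all.
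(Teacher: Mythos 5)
Your proof is correct. Note that the paper itself does not prove this statement: it is stated with a pointer to \cite[Theorem 4.8]{HH2018contra}, so there is no in-paper argument to diverge from. Your route --- $(1)\Rightarrow(2)$ by definition, and $(2)\Rightarrow(1)$ via S{\l}upecki's criterion, with $\vee$ supplying the surjective essentially binary operation and the ``disjunctive normal form'' $\bigvee_{a}(\delta_a(x)\wedge\mathsf{C}_{f(a)}(x))$ supplying all unary operations --- is exactly the standard argument behind the cited result, and all the table checks you rely on ($\mathbf{1}$ and $\mathbf{0}$ acting as identity and absorbing elements for $\wedge$, $\mathbf{0}$ as identity for $\vee$, $a\vee a=a$, and $\mathbf{0}\vee\mathbf{0}\neq\mathbf{0}\vee\mathbf{1}$, $\mathbf{0}\vee\mathbf{0}\neq\mathbf{1}\vee\mathbf{0}$ for essential binarity) do hold for the tables given in the paper. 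Your closing observation that the coordinatewise selector $\delta_{a_1}(x_1)\wedge\cdots\wedge\delta_{a_n}(x_n)$ yields arbitrary $n$-ary functions directly, making the appeal to S{\l}upecki dispensable, is also correct and is arguably the cleaner self-contained proof.
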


Building on this result, we obtain the following.
\begin{theorem}\label{thm:def.comp}
{\bf CNL$^2_4$} is definitionally complete.
\end{theorem}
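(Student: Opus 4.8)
The plan is to establish definitional completeness by showing that the four-valued matrix of Section~\ref{sec:preliminaries} itself — write it $\mathcal{M}=\langle\langle\{\mathbf 1,\mathbf i,\mathbf j,\mathbf 0\},\wedge,\vee,\Not\rangle,\mathcal D\rangle$ — is functionally complete. Indeed, the Soundness theorem and the Completeness theorem (Theorem~\ref{thm:completeness-for-unilateral}) together say that $\mathcal M$ is strongly sound and complete for {\bf CNL$^2_4$}, so exhibiting one functionally complete matrix of this kind is exactly what the definition of definitional completeness asks for. Everything therefore reduces to proving that the algebra $\mathfrak A=\langle\{\mathbf 1,\mathbf i,\mathbf j,\mathbf 0\},\wedge,\vee,\Not\rangle$ is functionally complete, and for this I would invoke Theorem~\ref{thm:f.c.}: since $\mathfrak A$ is an expansion of the {\bf FDE}-lattice $\langle\{\mathbf 1,\mathbf i,\mathbf j,\mathbf 0\},\wedge,\vee\rangle$ (from the tables, $\mathbf 1$ is the top, $\mathbf 0$ the bottom, and $\mathbf i,\mathbf j$ the two incomparable middle values), it is enough to define, by terms in $\wedge$, $\vee$ and $\Not$ alone, all the constant functions $\mathsf C_a$ and all the functions $\delta_a$.

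The key structural fact, read off the negation table, is that $\Not$ acts on $\{\mathbf 1,\mathbf i,\mathbf j,\mathbf 0\}$ as a single $4$-cycle, $\mathbf 1\mapsto\mathbf i\mapsto\mathbf 0\mapsto\mathbf j\mapsto\mathbf 1$. From this the constants are immediate: for every value $v$ the orbit $\{v,\Not v,\Not\Not v,\Not\Not\Not v\}$ is the whole carrier, so $x\wedge\Not x\wedge\Not\Not x\wedge\Not\Not\Not x$ is identically $\mathbf 0$ and $x\vee\Not x\vee\Not\Not x\vee\Not\Not\Not x$ is identically $\mathbf 1$; applying $\Not$, $\Not\Not$ and $\Not\Not\Not$ to the former term then also defines $\mathsf C_{\mathbf j}$, $\mathsf C_{\mathbf 1}$ and $\mathsf C_{\mathbf i}$. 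The same $4$-cycle reduces the definability of all four $\delta_a$ to that of $\delta_{\mathbf 1}$ alone: since $\Not^k x=\mathbf 1$ holds exactly when $x$ is the preimage of $\mathbf 1$ under the $k$-fold iterate of $\Not$, the terms $\delta_{\mathbf 1}(\Not x)$, $\delta_{\mathbf 1}(\Not\Not x)$ and $\delta_{\mathbf 1}(\Not\Not\Not x)$ define, respectively, $\delta_{\mathbf j}$, $\delta_{\mathbf 0}$ and $\delta_{\mathbf i}$.

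The one genuinely delicate step is defining $\delta_{\mathbf 1}$, and the trick I would use is that $\mathbf i\vee\mathbf j=\mathbf 1$ while $\mathbf i,\mathbf j$ are the only distinct incomparable pair: it is enough to produce two unary terms $A(x)$ and $B(x)$ that both take value $\mathbf 0$ at each of $\mathbf i,\mathbf j,\mathbf 0$ and take values $\mathbf i$ and $\mathbf j$ respectively at $\mathbf 1$, whence $A(x)\vee B(x)$ defines $\delta_{\mathbf 1}$. Using the constants already obtained, one can take $A(x):=(x\wedge\Not x)\wedge\mathsf C_{\mathbf i}(x)$ and $B(x):=\bigl((x\vee\Not x)\wedge\Not\Not\Not x\bigr)\wedge\mathsf C_{\mathbf j}(x)$; verifying that these take the four values claimed is a direct computation with the conjunction, disjunction and negation tables (for instance $x\wedge\Not x$ takes values $\mathbf i,\mathbf 0,\mathbf j,\mathbf 0$ at $\mathbf 1,\mathbf i,\mathbf j,\mathbf 0$, and meeting with the constant $\mathbf i$ kills the stray $\mathbf j$). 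Once $\delta_{\mathbf 1}$ — hence all $\delta_a$ and all $\mathsf C_a$ — are in place, Theorem~\ref{thm:f.c.} yields that $\mathfrak A$, and so $\mathcal M$, is functionally complete, and combined with soundness and completeness this gives that {\bf CNL$^2_4$} is definitionally complete. The remaining work (the soundness direction and the finitely many table checks) is entirely routine; isolating the right pair $A$, $B$ for $\delta_{\mathbf 1}$ is the part that needs the idea.
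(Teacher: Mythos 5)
Your proof is correct and follows essentially the same route as the paper: both reduce the claim to Theorem~\ref{thm:f.c.} together with the soundness and completeness theorems, and then exhibit terms in $\Not$, $\wedge$, $\vee$ defining all the $\mathsf{C}_a$ and $\delta_a$. The only difference lies in the witnessing terms --- the paper simply lists explicit definitions via a long auxiliary connective $\neg$, whereas you exploit the $4$-cycle $\mathbf 1\mapsto\mathbf i\mapsto\mathbf 0\mapsto\mathbf j\mapsto\mathbf 1$ to reduce everything to $\mathsf{C}_{\mathbf 0}$ and $\delta_{\mathbf 1}$; your table computations (in particular that $A$ takes values $\mathbf i,\mathbf 0,\mathbf 0,\mathbf 0$ and $B$ takes values $\mathbf j,\mathbf 0,\mathbf 0,\mathbf 0$, so $A\vee B=\delta_{\mathbf 1}$) check out.
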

\vspace{-3mm}
\begin{proof}
In view of the above theorem, it suffices to prove that all of the $\delta_a$s as well as $\mathsf{C}_a$s {\rm($a\in \{ {\mathbf 1}, {\mathbf i}, {\mathbf j}, {\mathbf 0} \}$)} are definable in $\langle \{\mathbf{1}, {\mathbf i}, {\mathbf j}, \mathbf{0}\}, \Not, \wedge, {\vee}\rangle$, and this can be done as follows: 

\smallskip

\noindent 
\begin{tabular}{llll}
$\delta_\mathbf{1}(x){:=}\neg (\Not \Not x {\vee} \Not \Not \Not x)$,&\!\!\!\!\!\!
$\delta_\mathbf{j}(x){:=}\neg (\Not \Not \Not x {\vee} \neg \neg  x)$,&\!\!\!\!\!\!
$\mathsf{C}_\mathbf{1}(x){:=}x {\lor} \Not \Not x$,&\!\!\!\!\!\!
$\mathsf{C}_\mathbf{j}(x){:=}\Not (x {\land} \Not \Not x)$, \\
$\delta_\mathbf{i}(x){:=}\neg (\Not \Not x {\vee} \neg \Not \Not \Not x)$, & 
$\delta_\mathbf{0}(x){:=}\neg \neg (\Not \Not x \wedge \Not \Not \Not x)$, &
$\mathsf{C}_\mathbf{i}(x){:=}\Not (x {\lor} \Not \Not x)$, &
$\mathsf{C}_\mathbf{0}(x){:=}x {\land} \Not \Not x$,
\end{tabular}
\medskip
\noindent where $\neg x{:=} \Not \Not \Not (\Not\Not ((x{\wedge}\Not \Not \Not x){\wedge}\Not \Not \Not (x{\wedge}\Not \Not \Not x)){\wedge}((x{\wedge}\Not \Not \Not x){\vee}\Not \Not \Not (x{\wedge}\Not \Not \Not x)))$. 
\end{proof}

Finally, we add a brief remark on the Post completeness.

\begin{definition}
The logic {\bf L} is \emph{Post complete} iff for every formula $A$ such that $\not\vdash A$, the extension of {\bf L} by $A$ becomes trivial, i.e., $\vdash_{{\bf L}\cup \{ A \} } B$ for any $B$.
\end{definition}

\begin{theorem}[Tokarz, \cite{Tokarz1973}]\label{thm:Tokarz}
Definitionally complete logics are Post complete.
\end{theorem}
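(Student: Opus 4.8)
The plan is to extract the refuting power of the witnessing matrix and funnel it through a single, carefully chosen substitution instance of the new axiom. So let $\mathfrak{M}=\langle\mathfrak{A},\mathcal{D}\rangle$, with $\mathfrak{A}=\langle\mathcal{V},f_1,\dots,f_n\rangle$, be a functionally complete matrix that is strongly sound and complete for {\bf L}, so that $\Gamma\vdash_{\bf L}B$ iff $\Gamma\models_{\mathfrak{M}}B$ for all $\Gamma$ and $B$. Suppose $\not\vdash_{\bf L}A$; I want to show that the extension ${\bf L}\cup\{A\}$ --- that is, {\bf L} together with every substitution instance of $A$ adjoined as an axiom --- proves every formula. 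Since $\not\vdash_{\bf L}A$, the completeness of $\mathfrak{M}$ for {\bf L} supplies a valuation $v$ with $v(A)=d\notin\mathcal{D}$ for some $d\in\mathcal{V}$.

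The heart of the argument is to upgrade this single refuting valuation to a substitution instance $A'$ of $A$ that is refuted by \emph{every} valuation. Write $A=A(p_1,\dots,p_k)$ and set $a_i:=v(p_i)$. By functional completeness of $\mathfrak{A}$ each constant function $\mathsf{C}_{a_i}$ is a term function (in the {\bf FDE}-expansion setting these are exactly the operations called $\mathsf{C}_a$ in Theorem~\ref{thm:f.c.}), so I can fix one-variable terms $t_i(x)$ realising them; picking a variable $q$, put $A':=A(t_1(q),\dots,t_k(q))$, which is a substitution instance of $A$. Writing $A^{\mathfrak{A}}$ for the term function that $A$ induces on $\mathfrak{A}$, for an arbitrary valuation $w$ one computes
\[
w(A')=A^{\mathfrak{A}}\bigl(t_1^{\mathfrak{A}}(w(q)),\dots,t_k^{\mathfrak{A}}(w(q))\bigr)=A^{\mathfrak{A}}(a_1,\dots,a_k)=v(A)=d\notin\mathcal{D},
\]
so no valuation designates $A'$.

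It remains to cash this in. Since $A'$ is non-designated under every valuation, the entailment $\{A'\}\models_{\mathfrak{M}}B$ holds vacuously for every formula $B$, whence strong completeness of $\mathfrak{M}$ for {\bf L} gives $\{A'\}\vdash_{\bf L}B$, and therefore $\{A'\}\vdash_{{\bf L}\cup\{A\}}B$. On the other hand $A'$ is a substitution instance of $A$, hence an axiom of ${\bf L}\cup\{A\}$, so $\vdash_{{\bf L}\cup\{A\}}A'$; composing the two derivations yields $\vdash_{{\bf L}\cup\{A\}}B$ for every $B$. Thus ${\bf L}\cup\{A\}$ is trivial, and since $A$ was an arbitrary non-theorem, {\bf L} is Post complete.

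I expect the one delicate point to be the middle step --- passing from ``$A$ is falsified by some valuation'' to ``some substitution instance of $A$ is falsified by all valuations''. This is precisely where functional completeness is indispensable: it is what guarantees the constant term functions $t_i$, and the conclusion genuinely fails for logics characterised by a matrix that is complete but not functionally complete. The remaining ingredients --- reading ``extension by $A$'' as adjoining all substitution instances of $A$, and using only the completeness half of the hypothesis --- are routine bookkeeping.
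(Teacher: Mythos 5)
Your proof is correct. The paper itself gives no argument for this theorem---it is simply cited from Tokarz \cite{Tokarz1973}---so there is nothing internal to compare against, but your argument is the standard one: use functional completeness to define the constant term functions $\mathsf{C}_{a_i}$, substitute them into the refuted formula $A$ to obtain an instance $A'$ that is undesignated under \emph{every} valuation, and then invoke strong completeness to get $\{A'\}\vdash_{\bf L} B$ for all $B$. Two small remarks: you correctly flag that ``extension by $A$'' must be read structurally (closed under substitution), which is the setting of Tokarz's paper; and your argument in fact uses only the definability of the constant functions rather than full functional completeness, so it establishes something slightly more general.
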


In view of Theorems~\ref{thm:def.comp} and~\ref{thm:Tokarz}, we obtain the following result.

\begin{corollary}
{\bf CNL$^2_4$} is Post complete.
\end{corollary}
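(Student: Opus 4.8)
The plan is to obtain the corollary as an immediate consequence of the two theorems just established. Theorem~\ref{thm:def.comp} states that {\bf CNL$^2_4$} is definitionally complete, and Theorem~\ref{thm:Tokarz} (Tokarz) states that every definitionally complete logic is Post complete. Chaining these two facts yields the claim directly, so the argument is essentially a single invocation: by Theorem~\ref{thm:def.comp}, {\bf CNL$^2_4$} satisfies the hypothesis of Theorem~\ref{thm:Tokarz}, hence it is Post complete.

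If one wishes to see what is really going on underneath this one-liner, the point is the following. Definitional completeness of {\bf CNL$^2_4$} means there is a functionally complete matrix — here $\langle \{\mathbf{1}, \mathbf{i}, \mathbf{j}, \mathbf{0}\}, \Not, \wedge, \vee\rangle$ with $\mathcal{D} = \{\mathbf{1}, \mathbf{i}\}$ — that is strongly sound and complete for the logic; soundness and completeness are the Soundness theorem and Theorem~\ref{thm:completeness-for-unilateral}, and functional completeness is secured in the proof of Theorem~\ref{thm:def.comp} via the explicit definitions of all the $\delta_a$ and $\mathsf{C}_a$ operations together with Theorem~\ref{thm:f.c.}. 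Given such a matrix, any non-theorem $A$ receives an undesignated value under some valuation; substituting appropriate defined $\delta$- or constant-formulas for the propositional variables of $A$ then allows one to derive, from $A$ taken as an additional axiom, a formula that takes only undesignated values, and from such a formula everything follows. That is exactly Tokarz's theorem, which we take as given.

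Since both ingredients are already available, there is no genuine obstacle: the only thing to verify is that the hypotheses of Theorem~\ref{thm:Tokarz} are met, namely that {\bf CNL$^2_4$} is a logic to which Theorem~\ref{thm:def.comp} applies, and this is immediate. The real work — establishing functional completeness through S{\l}upecki's criterion and exhibiting the defining formulas, in particular the formula for classical negation $\neg$ — was already carried out in the proof of Theorem~\ref{thm:def.comp}, so nothing further is needed here.
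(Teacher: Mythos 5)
Your proposal is correct and matches the paper's own proof exactly: the corollary is obtained by chaining Theorem~\ref{thm:def.comp} with Tokarz's Theorem~\ref{thm:Tokarz}. The additional unpacking of what Tokarz's theorem does under the hood is accurate but not needed, since the paper likewise treats the corollary as an immediate consequence of those two results.
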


\subsection{A few more results}
Before moving further, we list some  valid/derivable inferences, as well as invalid/non-derivable ones.
\begin{proposition}\label{prop:valid-inferences}
The following hold in {\bf CNL$^2_4$}.

\smallskip
$B\models_{\bf CNL^2_4} (A{\lor}\Not\Not A)$, \quad
$B\models_{\bf CNL^2_4} \Not (A{\lor}\Not\Not A)$, \quad
$A{\land}\Not\Not A\models_{\bf CNL^2_4} B$, \quad
$\Not (A{\land}\Not\Not A)\models_{\bf CNL^2_4} B$.
\end{proposition}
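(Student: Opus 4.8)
The plan is to verify the four consequences directly in the four-valued semantics, the point being that each of the relevant compound formulas is, as a function of the value of $A$, a \emph{constant}. First I would read off from the negation table the unary operation $\Not\Not$ that it induces, namely $\Not\Not\mathbf{1}=\mathbf{0}$, $\Not\Not\mathbf{i}=\mathbf{j}$, $\Not\Not\mathbf{j}=\mathbf{i}$, $\Not\Not\mathbf{0}=\mathbf{1}$. Hence, for \emph{any} formula $A$ and \emph{any} {\bf CNL$^2_4$}-interpretation, the unordered pair $\{V(A),V(\Not\Not A)\}$ equals either $\{\mathbf{1},\mathbf{0}\}$ or $\{\mathbf{i},\mathbf{j}\}$.

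Second, I would feed these two possibilities into the conjunction and disjunction tables and observe that $\mathbf{1}\land\mathbf{0}=\mathbf{i}\land\mathbf{j}=\mathbf{0}$ while $\mathbf{1}\lor\mathbf{0}=\mathbf{i}\lor\mathbf{j}=\mathbf{1}$. So $V(A\land\Not\Not A)=\mathbf{0}$ and $V(A\lor\Not\Not A)=\mathbf{1}$ under every interpretation, and then, by one more application of the negation table, $V(\Not(A\land\Not\Not A))=\mathbf{j}$ and $V(\Not(A\lor\Not\Not A))=\mathbf{i}$. In effect this just re-records the constancy of the very terms already written as $\mathsf{C}_\mathbf{1}(A)$ and $\mathsf{C}_\mathbf{0}(A)$ in the proof of Theorem~\ref{thm:def.comp}.

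The four statements then drop out of the definition of $\models_{\bf CNL^2_4}$ with $\mathcal{D}=\{\mathbf{1},\mathbf{i}\}$: $A\lor\Not\Not A$ and $\Not(A\lor\Not\Not A)$ always take designated values ($\mathbf{1}$ and $\mathbf{i}$ respectively), so the two consequences having one of them as conclusion hold for arbitrary $B$; while $A\land\Not\Not A$ and $\Not(A\land\Not\Not A)$ never take a designated value (they get $\mathbf{0}$ and $\mathbf{j}$), so the two consequences having one of them as sole premise hold vacuously, again for arbitrary $B$.

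I do not anticipate any real obstacle here; the one thing to keep in mind is that $A$ and $B$ are arbitrary formulas rather than propositional variables, but since the truth functions depend only on the assigned values it suffices to run through the four possible values of $V(A)$. Should a proof-theoretic argument be preferred, the same four facts can be derived and then transported by the Soundness theorem: $(\Not\Not 2)$ gives $\vdash A\lor\Not\Not A$ outright; its instance $\vdash\Not A\lor\Not\Not\Not A$, together with $\lor$-elimination and the two rules introducing $\Not(C\lor D)$ from $\Not C$ and from $\Not D$, gives $\vdash\Not(A\lor\Not\Not A)$; and the elimination rules for conjunction and for negated conjunction supply precisely the two premises of $(\Not\Not 1)$, yielding $A\land\Not\Not A\vdash B$ and $\Not(A\land\Not\Not A)\vdash B$. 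Either route establishes the proposition.
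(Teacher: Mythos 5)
Your proposal is correct and follows essentially the same route as the paper, which likewise just observes that $V(A{\lor}\Not\Not A)=\mathbf{1}$, $V(\Not(A{\lor}\Not\Not A))=\mathbf{i}$, $V(A{\land}\Not\Not A)=\mathbf{0}$ and $V(\Not(A{\land}\Not\Not A))=\mathbf{j}$ under every interpretation; you merely spell out the table computations (and correctly link these terms to the constants $\mathsf{C}_\mathbf{1}$, $\mathsf{C}_\mathbf{i}$, $\mathsf{C}_\mathbf{j}$, $\mathsf{C}_\mathbf{0}$ from Theorem~\ref{thm:def.comp}). The supplementary proof-theoretic derivations are also sound, but they are an optional extra rather than a different method.
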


\vspace{-3mm}
\begin{proof}
It suffices to observe that $V(A{\lor}\Not\Not A){=}\mathbf{1}$, and $V(\Not(A{\lor}\Not\Not A)){=}\mathbf{i}$ for the first two items, and that $V(A{\land}\Not\Not A){=}\mathbf{0}$, and $V(\Not(A{\land}\Not\Not A)){=}\mathbf{j}$ for the latter two items.
\end{proof}

\begin{proposition}
The following also hold in {\bf CNL$^2_4$}.

\smallskip
$q\not\models_{\bf CNL^2_4} p{\lor}\Not p$, \quad
$p{\land}\Not p\not\models_{\bf CNL^2_4} q$, \quad
$\Not \Not p\not\models_{\bf CNL^2_4} p$, \quad
$p\not\models_{\bf CNL^2_4} \Not \Not p$.
\end{proposition}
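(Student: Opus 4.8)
The plan is to exhibit, for each of the four non-derivability claims, a single {\bf CNL$^2_4$}-interpretation that separates premise from conclusion with respect to the designated set $\mathcal{D}=\{\mathbf{1},\mathbf{i}\}$; by the definition of $\models_{\bf CNL^2_4}$ this suffices, since one countermodel refutes the consequence. So the whole proposition reduces to four short table lookups, and the real work is just picking the right values for $p$ and $q$.

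For the first two claims I would look for a value that is designated but whose relevant combination is not. For $q\not\models_{\bf CNL^2_4} p{\lor}\Not p$: take $I(q)=\mathbf{1}$ (designated) and choose $I(p)$ so that $p\lor\Not p$ lands in $\{\mathbf{j},\mathbf{0}\}$; inspecting the $\Not$-table and the $\lor$-table, $I(p)=\mathbf{j}$ gives $\Not p=\mathbf{1}$, so that fails, but $I(p)=\mathbf{0}$ gives $\Not p=\mathbf{j}$ and $\mathbf{0}\lor\mathbf{j}=\mathbf{j}\notin\mathcal{D}$ — that works. For $p{\land}\Not p\not\models_{\bf CNL^2_4} q$: I want $p\land\Not p\in\mathcal{D}$ while $q\notin\mathcal{D}$; taking $I(p)=\mathbf{i}$ gives $\Not p=\mathbf{0}$ and $\mathbf{i}\land\mathbf{0}=\mathbf{0}$, no good, so try $I(p)=\mathbf{1}$: $\Not p=\mathbf{i}$ and $\mathbf{1}\land\mathbf{i}=\mathbf{i}\in\mathcal{D}$; then set $I(q)=\mathbf{0}$, done. (One should double-check each candidate against the printed tables, since the $\land$ and $\lor$ operations here are not the familiar {\bf FDE} ones.)

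For the last two claims, involving $\Not\Not$, I would first note from the negation table that $\Not\Not$ acts as $\mathbf{1}\mapsto\mathbf{0}$, $\mathbf{i}\mapsto\mathbf{j}$, $\mathbf{j}\mapsto\mathbf{i}$, $\mathbf{0}\mapsto\mathbf{1}$. For $\Not\Not p\not\models_{\bf CNL^2_4} p$: I need $\Not\Not p\in\mathcal{D}$ and $p\notin\mathcal{D}$, i.e. $p\in\{\mathbf{j},\mathbf{0}\}$ with $\Not\Not p\in\{\mathbf{1},\mathbf{i}\}$; $p=\mathbf{0}$ gives $\Not\Not p=\mathbf{1}$ — that works. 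For $p\not\models_{\bf CNL^2_4}\Not\Not p$: dually I need $p\in\{\mathbf{1},\mathbf{i}\}$ with $\Not\Not p\in\{\mathbf{j},\mathbf{0}\}$; $p=\mathbf{1}$ gives $\Not\Not p=\mathbf{0}$ — that works.

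I do not anticipate any genuine obstacle: there is no induction, no subtle limiting argument, just the need to respect the specific (non-{\bf FDE}) truth tables given in the definition. The only place a slip could occur is transcribing the $\land$/$\lor$ values or the iterated negations, so the one disciplined step is to verify each of the four chosen valuations line by line against the tables before writing it down.
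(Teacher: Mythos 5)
Your proposal is correct and coincides with the paper's own proof: it exhibits exactly the same countermodels ($I(p)=\mathbf{0},I(q)=\mathbf{1}$; $I(p)=\mathbf{1},I(q)=\mathbf{0}$; $I(p)=\mathbf{0}$; $I(p)=\mathbf{1}$), and your table computations, including the action of $\Not\Not$, all check out. No differences worth noting.
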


\vspace{-3mm}
\begin{proof}
Interpretations such that $I_1(p){=}\mathbf{0}, I_2(q){=}\mathbf{1}$ for the first item, $I_2(p){=}\mathbf{1}, I_2(q){=}\mathbf{0}$ for the second item, $I_3(p){=}\mathbf{0}$ for the third item, and $I_4(p){=}\mathbf{1}$ for the last item will establish the desired results.
\end{proof}

\begin{remark}
One might be already tempted to discuss features of {\bf CNL$^2_4$} based on the above observations. In particular, one may be tempted to refer to $\Not$ as negation. This, however, is a rather delicate matter, and we will return to this point in \S\ref{sec:reflections} after some discussions on the interpretations of the four values.
\end{remark}

\section{Carnapian theme: Four interpretations of one truth table}\label{sec:Carnap}

We now present the options for the readings of the truth values, according to the two strategies we mentioned before, viz., the re-readings for the original four-valued semantics, and the relational semantics. 

\subsection{Option 1}
The first two options will be to interpret $\mathbf{1}$ and $\mathbf{0}$ as $\mathbf{t}$ and $\mathbf{f}$ of {\bf FDE}, respectively, and make a choice between options in interpreting the intermediate values. Let us start by following the choice made by Grigoriev and Zaitsev, that is, the values $\mathbf{i}$ and $\mathbf{j}$ are interpreted as $\mathbf{b}$ and $\mathbf{n}$, respectively. 

\begin{center}
{\small
\begin{tabular}{c|c}
$A$ & $\Not A$ \\
\hline
$\mathbf{t}$ & $\mathbf{b}$ \\
$\mathbf{b}$ & $\mathbf{f}$ \\
$\mathbf{n}$ & $\mathbf{t}$ \\
$\mathbf{f}$ & $\mathbf{n}$ \\
\end{tabular}
\qquad 
\begin{tabular}{c|cccc}
$A \land B$ & $\mathbf{t}$ & $\mathbf{b}$ & $\mathbf{n}$ & $\mathbf{f}$ \\ \hline
$\mathbf{t}$ & $\mathbf{t}$ & $\mathbf{b}$ & $\mathbf{n}$ & $\mathbf{f}$\\
$\mathbf{b}$ & $\mathbf{b}$ & $\mathbf{b}$ & $\mathbf{f}$ & $\mathbf{f}$\\
$\mathbf{n}$ & $\mathbf{n}$ & $\mathbf{f}$ & $\mathbf{n}$ & $\mathbf{f}$\\
$\mathbf{f}$ & $\mathbf{f}$ & $\mathbf{f}$ & $\mathbf{f}$ & $\mathbf{f}$\\
\end{tabular}
\qquad
\begin{tabular}{c|cccc}
$A \lor B$ & $\mathbf{t}$ & $\mathbf{b}$ & $\mathbf{n}$ & $\mathbf{f}$ \\ \hline
$\mathbf{t}$ & $\mathbf{t}$ & $\mathbf{t}$ & $\mathbf{t}$ & $\mathbf{t}$\\
$\mathbf{b}$ & $\mathbf{t}$ & $\mathbf{b}$ & $\mathbf{t}$ & $\mathbf{b}$\\
$\mathbf{n}$ & $\mathbf{t}$ & $\mathbf{t}$ & $\mathbf{n}$ & $\mathbf{n}$\\
$\mathbf{f}$ & $\mathbf{t}$ & $\mathbf{b}$ & $\mathbf{n}$ & $\mathbf{f}$
\end{tabular}
}
\end{center}
Then, this makes it very clear that the resulting truth tables are those introduced by Paul Ruet in \cite{Ruet1996}. Moreover, the resulting logic is obtained by considering the truth preservation by building on the above truth tables, and therefore, it is the same logic introduced by Ruet.\footnote{To be more precise, Ruet added the unary operator $\Not$ (or $\car$ in Ruet's notation) on top of Belnap-Dunn logic. Therefore, for the purpose of establishing the definitional equivalence of the system of Ruet and the system of Grigoriev and Zaitsev, we need to check that de Morgan negation is definable in {\bf CNL$^2_4$}. However, this is an immediate corollary of the functional completeness result. Therefore, the desired result is established.}\footnote{Note that Ruet's system is also discussed in \cite{BGZ2022} under the name {\bf dCP} by Grigoriev and Zaitsev together with Alex Belikov. Even though Grigoriev and Zaitsev do not state it explicitly in \cite{Grigoriev2022basic}, {\bf CNL$^2_4$} is definitionally equivalent to {\bf dCP}.}

In order to observe the differences of interpretations, let us apply the mechanical procedure described in \cite{OS-BD-FC}, and offer an alternative presentation of the interpretation in terms of truth and falsity conditions, assuming that we rewrite the four values $\mathbf{t}, \mathbf{b}, \mathbf{n}$ and $\mathbf{f}$ as $\{ 1 \}, \{ 1, 0 \}, \emptyset$ and $\{ 0 \}$, respectively. For the present case, we obtain the following truth and falsity conditions. 

\begin{tabular}{ll}
$1\in V(\Not A)$ iff $0\not\in V(A)$; & $0\in V(\Not A)$ iff $1\in V(A)$;\\
$1\in V(A{\land} B)$ iff $1\in V(A)$ and $1\in V(B)$; & $0\in V(A{\land} B)$ iff $0\in V(A)$ or $0\in V(B)$; \\
$1\in V(A{\lor} B)$ iff $1\in V(A)$ or $1\in V(B)$; & $0\in V(A{\lor} B)$ iff $0\in V(A)$ and $0\in V(B)$.
\end{tabular}

\noindent 
Therefore, it becomes very clear that the truth and falsity conditions are almost the same with {\bf FDE}. Indeed, $\land$ and $\lor$ are interpreted as in {\bf FDE}, and for $\Not$, the truth condition is the only condition that is deviating from {\bf FDE}.\footnote{Note that $\Not\Not$ behaves as the Boolean complement. For discussions on such a kind of connective, see \cite{Humberstone1995,Kamide2016, Kamide40FDE,HH2018contra,PaoliDemiNegation,Omori2022varieties}.}

\subsection{Option 2}
Let us now turn to the other option. That is, the values $\mathbf{i}$ and $\mathbf{j}$ are interpreted as $\mathbf{n}$ and $\mathbf{b}$, respectively. Then, as a result of rewriting the values, we obtain the following truth table.

\begin{center}
{\small
\begin{tabular}{c|c}
$A$ & $\Not A$ \\
\hline
$\mathbf{t}$ & $\mathbf{n}$ \\
$\mathbf{n}$ & $\mathbf{f}$ \\
$\mathbf{b}$ & $\mathbf{t}$ \\
$\mathbf{f}$ & $\mathbf{b}$ \\
\end{tabular}
\qquad 
\begin{tabular}{c|cccc}
$A \land B$ & $\mathbf{t}$ & $\mathbf{n}$ & $\mathbf{b}$ & $\mathbf{f}$ \\ \hline
$\mathbf{t}$ & $\mathbf{t}$ & $\mathbf{n}$ & $\mathbf{b}$ & $\mathbf{f}$\\
$\mathbf{n}$ & $\mathbf{n}$ & $\mathbf{n}$ & $\mathbf{f}$ & $\mathbf{f}$\\
$\mathbf{b}$ & $\mathbf{b}$ & $\mathbf{f}$ & $\mathbf{b}$ & $\mathbf{f}$\\
$\mathbf{f}$ & $\mathbf{f}$ & $\mathbf{f}$ & $\mathbf{f}$ & $\mathbf{f}$\\
\end{tabular}
\qquad
\begin{tabular}{c|cccc}
$A \lor B$ & $\mathbf{t}$ & $\mathbf{n}$ & $\mathbf{b}$ & $\mathbf{f}$ \\ \hline
$\mathbf{t}$ & $\mathbf{t}$ & $\mathbf{t}$ & $\mathbf{t}$ & $\mathbf{t}$\\
$\mathbf{n}$ & $\mathbf{t}$ & $\mathbf{n}$ & $\mathbf{t}$ & $\mathbf{n}$\\
$\mathbf{b}$ & $\mathbf{t}$ & $\mathbf{t}$ & $\mathbf{b}$ & $\mathbf{b}$\\
$\mathbf{f}$ & $\mathbf{t}$ & $\mathbf{n}$ & $\mathbf{b}$ & $\mathbf{f}$
\end{tabular}
}
\end{center}
If we rewrite it slightly, for the purpose of making the comparison easier, we obtain the following tables. 

\begin{center}
{\small
\begin{tabular}{c|c}
$A$ & $\Not A$ \\
\hline
$\mathbf{t}$ & $\mathbf{n}$ \\
$\mathbf{b}$ & $\mathbf{t}$ \\
$\mathbf{n}$ & $\mathbf{f}$ \\
$\mathbf{f}$ & $\mathbf{b}$ \\
\end{tabular}
\qquad 
\begin{tabular}{c|cccc}
$A \land B$ & $\mathbf{t}$ & $\mathbf{b}$ & $\mathbf{n}$ & $\mathbf{f}$ \\ \hline
$\mathbf{t}$ & $\mathbf{t}$ & $\mathbf{b}$ & $\mathbf{n}$ & $\mathbf{f}$\\
$\mathbf{b}$ & $\mathbf{b}$ & $\mathbf{b}$ & $\mathbf{f}$ & $\mathbf{f}$\\
$\mathbf{n}$ & $\mathbf{n}$ & $\mathbf{f}$ & $\mathbf{n}$ & $\mathbf{f}$\\
$\mathbf{f}$ & $\mathbf{f}$ & $\mathbf{f}$ & $\mathbf{f}$ & $\mathbf{f}$\\
\end{tabular}
\qquad
\begin{tabular}{c|cccc}
$A \lor B$ & $\mathbf{t}$ & $\mathbf{b}$ & $\mathbf{n}$ & $\mathbf{f}$ \\ \hline
$\mathbf{t}$ & $\mathbf{t}$ & $\mathbf{t}$ & $\mathbf{t}$ & $\mathbf{t}$\\
$\mathbf{b}$ & $\mathbf{t}$ & $\mathbf{b}$ & $\mathbf{t}$ & $\mathbf{b}$\\
$\mathbf{n}$ & $\mathbf{t}$ & $\mathbf{t}$ & $\mathbf{n}$ & $\mathbf{n}$\\
$\mathbf{f}$ & $\mathbf{t}$ & $\mathbf{b}$ & $\mathbf{n}$ & $\mathbf{f}$
\end{tabular}
}
\end{center}
Then, this makes it clear that the resulting truth tables are those introduced by Norihiro Kamide in \cite{Kamide2016,Kamide40FDE}, and explored in \cite{HH2018contra,Omori2022varieties,PaoliDemiNegation}.\footnote{To be precise, there are some differences in the language. Indeed, in \cite{Kamide2016,Kamide40FDE,HH2018contra}, a classical conditional is added, while informational join and meet are added in \cite{PaoliDemiNegation}. The language in \cite{Omori2022varieties} is the same as here.} Moreover, we obtain the following truth and falsity conditions. 

\begin{tabular}{ll}
$1\in V(\Not A)$ iff $0\in V(A)$; & $0\in V(\Not A)$ iff $1\not\in V(A)$; \\
$1\in V(A{\land} B)$ iff $1\in V(A)$ and $1\in V(B)$; & $0\in V(A{\land} B)$ iff $0\in V(A)$ or $0\in V(B)$; \\
$1\in V(A{\lor} B)$ iff $1\in V(A)$ or $1\in V(B)$; & $0\in V(A{\lor} B)$ iff $0\in V(A)$ and $0\in V(B)$.
\end{tabular}

\noindent Compared to {\bf FDE}, the only difference lies in the falsity conditions for $\Not$.

Note, however, that the resulting logic is \emph{not} the same since the designated values are $\mathbf{t}$ and $\mathbf{n}$, not $\mathbf{t}$ and $\mathbf{b}$. In other words, we are considering the consequence relation in terms of non-falsity preservation, rather than truth preservation.\footnote{The case of consequence relation defined in terms of truth preservation within the same language is explored in \cite{Omori2022varieties}.}

\subsection{Option 3}
Seen in the light of the natural deduction system, we may also think of regarding the binary connectives as \emph{information} connectives, rather than \emph{truth} connectives. This will correspond to interpret $\mathbf{1}$ and $\mathbf{0}$ as $\mathbf{b}$ and $\mathbf{n}$ of {\bf FDE}, respectively. Then, there are again two options in interpreting the intermediate values. Let us begin with the case in which we interpret $\mathbf{i}$ and $\mathbf{j}$ as $\mathbf{t}$ and $\mathbf{f}$ of {\bf FDE}, respectively. Then, as a result of rewriting the values, we obtain the following truth table.

\begin{center}
{\small
\begin{tabular}{c|c}
$A$ & $\Not A$ \\
\hline
$\mathbf{b}$ & $\mathbf{t}$ \\
$\mathbf{t}$ & $\mathbf{n}$ \\
$\mathbf{f}$ & $\mathbf{b}$ \\
$\mathbf{n}$ & $\mathbf{f}$ \\
\end{tabular}
\qquad 
\begin{tabular}{c|cccc}
$A \land B$ & $\mathbf{b}$ & $\mathbf{t}$ & $\mathbf{f}$ & $\mathbf{n}$ \\ \hline
$\mathbf{b}$ & $\mathbf{b}$ & $\mathbf{t}$ & $\mathbf{f}$ & $\mathbf{n}$\\
$\mathbf{t}$ & $\mathbf{t}$ & $\mathbf{t}$ & $\mathbf{n}$ & $\mathbf{n}$\\
$\mathbf{f}$ & $\mathbf{f}$ & $\mathbf{n}$ & $\mathbf{f}$ & $\mathbf{n}$\\
$\mathbf{n}$ & $\mathbf{n}$ & $\mathbf{n}$ & $\mathbf{n}$ & $\mathbf{n}$\\
\end{tabular}
\qquad
\begin{tabular}{c|cccc}
$A \lor B$ & $\mathbf{b}$ & $\mathbf{t}$ & $\mathbf{f}$ & $\mathbf{n}$ \\ \hline
$\mathbf{b}$ & $\mathbf{b}$ & $\mathbf{b}$ & $\mathbf{b}$ & $\mathbf{b}$\\
$\mathbf{t}$ & $\mathbf{b}$ & $\mathbf{t}$ & $\mathbf{b}$ & $\mathbf{t}$\\
$\mathbf{f}$ & $\mathbf{b}$ & $\mathbf{b}$ & $\mathbf{f}$ & $\mathbf{f}$\\
$\mathbf{n}$ & $\mathbf{b}$ & $\mathbf{t}$ & $\mathbf{f}$ & $\mathbf{n}$
\end{tabular}
}
\end{center}
If we again rewrite it slightly, then we obtain the following truth tables. 

\begin{center}
{\small
\begin{tabular}{c|c}
$A$ & $\Not A$ \\
\hline
$\mathbf{t}$ & $\mathbf{n}$ \\
$\mathbf{b}$ & $\mathbf{t}$ \\
$\mathbf{n}$ & $\mathbf{f}$ \\
$\mathbf{f}$ & $\mathbf{b}$ \\
\end{tabular}
\qquad 
\begin{tabular}{c|cccc}
$A \land B$ & $\mathbf{t}$ & $\mathbf{b}$ & $\mathbf{n}$ & $\mathbf{f}$ \\
\hline
$\mathbf{t}$ & $\mathbf{t}$ & $\mathbf{t}$ & $\mathbf{n}$ & $\mathbf{n}$\\
$\mathbf{b}$ & $\mathbf{t}$ & $\mathbf{b}$ & $\mathbf{n}$ & $\mathbf{f}$\\
$\mathbf{n}$ & $\mathbf{n}$ & $\mathbf{n}$ & $\mathbf{n}$ & $\mathbf{n}$\\
$\mathbf{f}$ & $\mathbf{n}$ & $\mathbf{f}$ & $\mathbf{n}$ & $\mathbf{f}$\\
\end{tabular}
\qquad
\begin{tabular}{c|cccc}
$A \lor B$ & $\mathbf{t}$ & $\mathbf{b}$ & $\mathbf{n}$ & $\mathbf{f}$ \\
\hline
$\mathbf{t}$ & $\mathbf{t}$ & $\mathbf{b}$ & $\mathbf{t}$ & $\mathbf{b}$\\
$\mathbf{b}$ & $\mathbf{b}$ & $\mathbf{b}$ & $\mathbf{b}$ & $\mathbf{b}$\\
$\mathbf{n}$ & $\mathbf{t}$ & $\mathbf{b}$ & $\mathbf{n}$ & $\mathbf{f}$\\
$\mathbf{f}$ & $\mathbf{b}$ & $\mathbf{b}$ & $\mathbf{f}$ & $\mathbf{f}$\\
\end{tabular}
}
\end{center}
Then, this makes it clear that the resulting truth tables are obtained by putting Kamide's unary operator together with information meet and join connectives. Moreover, we obtain the following truth and falsity conditions. 

\begin{tabular}{ll}
$1\in V(\Not A)$ iff $0\in V(A)$; & $0\in V(\Not A)$ iff $1\not\in V(A)$; \\
$1\in V(A{\land} B)$ iff $1\in V(A)$ and $1\in V(B)$; & $0\in V(A{\land} B)$ iff $0\in V(A)$ and $0\in V(B)$; \\
$1\in V(A{\lor} B)$ iff $1\in V(A)$ or $1\in V(B)$; & $0\in V(A{\lor} B)$ iff $0\in V(A)$ or $0\in V(B)$.
\end{tabular}

\noindent Now, compared to {\bf FDE}, the truth conditions are exactly the same for all the connectives. However, the falsity conditions are different, and in particular, for $\land$ and $\lor$, those are taken as in the information meet and join, respectively.

Finally, since the designated values are $\mathbf{t}$ and $\mathbf{b}$, the resulting logic is obtained by considering the truth preservation building on the above truth tables.

\subsection{Option 4}
Let us now turn to the other option. That is, we interpret $\mathbf{i}$ and $\mathbf{j}$ as $\mathbf{f}$ and $\mathbf{t}$ of {\bf FDE}, respectively. Then, as a result of rewriting the values, we obtain the following truth table.

\begin{center}
{\small
\begin{tabular}{c|c}
$A$ & $\Not A$ \\
\hline
$\mathbf{b}$ & $\mathbf{f}$ \\
$\mathbf{f}$ & $\mathbf{n}$ \\
$\mathbf{t}$ & $\mathbf{b}$ \\
$\mathbf{n}$ & $\mathbf{t}$ \\
\end{tabular}
\qquad 
\begin{tabular}{c|cccc}
$A \land B$ & $\mathbf{b}$ & $\mathbf{f}$ & $\mathbf{t}$ & $\mathbf{n}$ \\ \hline
$\mathbf{b}$ & $\mathbf{b}$ & $\mathbf{f}$ & $\mathbf{t}$ & $\mathbf{n}$\\
$\mathbf{f}$ & $\mathbf{f}$ & $\mathbf{t}$ & $\mathbf{n}$ & $\mathbf{n}$\\
$\mathbf{t}$ & $\mathbf{t}$ & $\mathbf{n}$ & $\mathbf{t}$ & $\mathbf{n}$\\
$\mathbf{n}$ & $\mathbf{n}$ & $\mathbf{n}$ & $\mathbf{n}$ & $\mathbf{n}$\\
\end{tabular}
\qquad
\begin{tabular}{c|cccc}
$A \lor B$ & $\mathbf{b}$ & $\mathbf{f}$ & $\mathbf{t}$ & $\mathbf{n}$ \\ \hline
$\mathbf{b}$ & $\mathbf{b}$ & $\mathbf{b}$ & $\mathbf{b}$ & $\mathbf{b}$\\
$\mathbf{f}$ & $\mathbf{b}$ & $\mathbf{f}$ & $\mathbf{b}$ & $\mathbf{f}$\\
$\mathbf{t}$ & $\mathbf{b}$ & $\mathbf{b}$ & $\mathbf{t}$ & $\mathbf{t}$\\
$\mathbf{n}$ & $\mathbf{b}$ & $\mathbf{f}$ & $\mathbf{t}$ & $\mathbf{n}$
\end{tabular}
}
\end{center}
If we again rewrite it slightly, then we obtain the following truth tables. 

\begin{center}
{\small
\begin{tabular}{c|c}
$A$ & $\Not A$ \\
\hline
$\mathbf{t}$ & $\mathbf{b}$ \\
$\mathbf{b}$ & $\mathbf{f}$ \\
$\mathbf{n}$ & $\mathbf{t}$ \\
$\mathbf{f}$ & $\mathbf{n}$ \\
\end{tabular}
\qquad 
\begin{tabular}{c|cccc}
$A \land B$ & $\mathbf{t}$ & $\mathbf{b}$ & $\mathbf{n}$ & $\mathbf{f}$ \\
\hline
$\mathbf{t}$ & $\mathbf{t}$ & $\mathbf{t}$ & $\mathbf{n}$ & $\mathbf{n}$\\
$\mathbf{b}$ & $\mathbf{t}$ & $\mathbf{b}$ & $\mathbf{n}$ & $\mathbf{f}$\\
$\mathbf{n}$ & $\mathbf{n}$ & $\mathbf{n}$ & $\mathbf{n}$ & $\mathbf{n}$\\
$\mathbf{f}$ & $\mathbf{n}$ & $\mathbf{f}$ & $\mathbf{n}$ & $\mathbf{f}$\\
\end{tabular}
\qquad
\begin{tabular}{c|cccc}
$A \lor B$ & $\mathbf{t}$ & $\mathbf{b}$ & $\mathbf{n}$ & $\mathbf{f}$ \\
\hline
$\mathbf{t}$ & $\mathbf{t}$ & $\mathbf{b}$ & $\mathbf{t}$ & $\mathbf{b}$\\
$\mathbf{b}$ & $\mathbf{b}$ & $\mathbf{b}$ & $\mathbf{b}$ & $\mathbf{b}$\\
$\mathbf{n}$ & $\mathbf{t}$ & $\mathbf{b}$ & $\mathbf{n}$ & $\mathbf{f}$\\
$\mathbf{f}$ & $\mathbf{b}$ & $\mathbf{b}$ & $\mathbf{f}$ & $\mathbf{f}$\\
\end{tabular}
}
\end{center}
Then, this makes it clear that the resulting truth tables are obtained by putting Ruet's unary operator together with information meet and join connectives. Moreover, we obtain the following truth and falsity conditions. 

\begin{tabular}{ll}
$1\in V(\Not A)$ iff $0\not\in V(A)$; & $0\in V(\Not A)$ iff $1\in V(A)$; \\
$1\in V(A{\land} B)$ iff $1\in V(A)$ and $1\in V(B)$; & $0\in V(A{\land} B)$ iff $0\in V(A)$ and $0\in V(B)$; \\
$1\in V(A{\lor} B)$ iff $1\in V(A)$ or $1\in V(B)$; & $0\in V(A{\lor} B)$ iff $0\in V(A)$ or $0\in V(B)$.
\end{tabular}

\noindent Now, compared to {\bf FDE}, the falsity condition is the same for $\Not$, and the truth conditions are exactly the same for $\land$ and $\lor$. However, the other conditions are different, and in particular, for $\land$ and $\lor$, the falsity condition is taken as in the information meet and join, respectively.

Moreover, since the designated values are $\mathbf{b}$ and $\mathbf{f}$, the resulting logic is obtained by considering the falsity preservation building on the above truth tables.


\subsection{A summary of the four options}
The readings may be summarized in the following correspondence table:
 \begin{center}
\begin{tabular}{c|cccc}
Options & $O1$ & $O2$ & $O3$ & $O4$ \\
\hline
$\mathbf{1}$ & $\mathbf{t}$ & $\mathbf{t}$ & $\mathbf{b}$ & $\mathbf{b}$\\
$\mathbf{0}$ & $\mathbf{f}$ & $\mathbf{f}$ & $\mathbf{n}$ & $\mathbf{n}$\\
$\mathbf{i}$ & $\mathbf{b}$ & $\mathbf{n}$ & $\mathbf{t}$ & $\mathbf{f}$\\
$\mathbf{j}$ & $\mathbf{n}$ & $\mathbf{b}$ & $\mathbf{f}$ & $\mathbf{t}$\\
\end{tabular}
\end{center}
There are at least four different ways of interpreting the truth-values for the system. Notice that this is not just a matter of re-interpreting them with different names, but the fact that \textit{different accounts of the truth values} may be exchanged, while, at the same time, the meaning of the connectives changes, without a difference being made at the level of the deductive rules for the system. 

As a result of this process of re-interpreting truth values and connectives, we obtain a strengthened version of the Carnap problem for the system under consideration. The matter is that a quite radical form of underdetermination arises that is internal to a fixed formal apparatus selected for the semantics; the same truth-values are able to exchange their roles and that change can go quite unnoticed from the point of view of the deductive behavior. This situation seems to be even more complex, or, at least, to add a layer of complexity to typical situations where categoricity is lacking, given that what is at stake here is not that additional surplus truth values are being added without making a difference for the consequence relation, rather a sort of underdetermination of meaning that can cause serious problems; radical misunderstanding can arise without being noticed.  

That is precisely where intelligibility seems to be threatened, and one needs some common background from which to access the many options. To discuss this, the idea that the common more or less classical background offered by the relational semantics may be of help (see also \cite{Omori2023-theoria} for additional discussion). That is, at least, the suggestion by Susan Haack, as we take it in the Haackian strategy, and as we shall discuss next.  



\section{Haackian theme: another representation for classicists}\label{sec:Haack}
Although there is such a deep underdetermination of meaning between the four reading options, and one may be asking what kind of truth values are being dealt with here and what the connectives actually mean, there is also a sense in which, once an option is fixed, a classical logician can make sense of what is being advanced and of the kinds of disagreement that are at stake when it comes to deal with the other remaining readings. The first step for a better understanding of these problems concerns recognizing that not everything is lost once there are many conflicting options. As Susan Haack claimed, the first step is to notice that the addition of new truth values is \textit{not always} accompanied by the rejections of bivalence or two-valuedness:
\begin{quote}
    Not surprisingly, it has sometimes been supposed that the use of a many-valued logic would inevitably involve a claim to the effect that there are more than two truth-values [\dots] But in fact, I think it is clear that a many-valued logic needn't require the admission of one or more extra truth-values over and above `true' and `false', and indeed, that it needn't even require the rejection of bivalence. \cite[p.213]{haack1978} 
\end{quote}
The claim here is that one should avoid new or `sui generis' truth values that should be understood on their own terms, such as `paradoxical', or `meaningless', because the addition of such truth values is incompatible with the purported aims of logic (the investigation of which inferences are legitimate, in the sense of truth preservation) and they are, in the end, quite obscure if they are to  have a proper meaning, not couched in terms of usual truth values. With those exceptions out of the way, one may sometimes provide for readings of the new truth values that are compatible with two-valuedness, preserving the intelligibility of the system (see a discussion in \cite{OmoriISMVL22,Omori2023-theoria}). 

That strategy can certainly be applied in scenarios involving four truth values, as the one we have been discussing so far. That happens because truth values such as `neither true nor false' are actually just the lack of classical truth-values, and values such as `both true and false' just indicates that both classical values are attributed to a formula. So, in a sense, the truth values required here are not of a \textit{sui generis} kind. Adding `neither' and `both' is certainly not very classical, but their readings are clear enough to meet Haack's standards. Discussing the case of {\bf K3}, where the third truth value may be read as `neither', Haack explains:
\begin{quote}
    Assignment of the third truth value to a wff [wellformed
formula] indicates that it has no truth value,
not that it has a non-standard, third truth value. \cite[p.213]{haack1978}
\end{quote}

The situation for the four-valued system we are considering here, then, gets clearer when we use the Dunn semantics presented for the four options discussed above. There is then a common ground of truth values that allows one to make sense of the different readings of the system and of the meanings of the connectives. Notice, nothing in Haack's strategy requires that one and the same system cannot have different readings in such more intelligible terms, it is only required that at least one such reading exists.


One can make the case for the difference in understanding of the meanings clearer by selecting, for instance, options 1 and 2 (a similar problem arises for any pair of options, of course). For fixing on a more specific problem, let us be concerned with negation according to these two options. This gives rise to problems that may look quite similar to cases available in the literature concerning the similarity of gaps and gluts (see \cite{Beall2004}). Suppose that two adherents, one of option $1$, and another one of option $2$, agree that a given proposition $A$ is true (i.e. $1 \in V(A)$). The supporter of reading $1$ would claim that $\Not A$ is both $1$ and $0$, while the friend of reading $2$ would understand it as being neither $1$ nor $0$. The disagreement persists with $\Not \Not A$; the problem, however, is that they would happily agree that $A \lor \Not \Not A$. So, nothing changes, from a logical point of view if we focus only on the deductive behavior, although there is an abyssal difference in understanding the meaning of the truth values and the meaning of negation. We can, however, clearly make sense of the differences once the terminology of Dunn semantics is employed. A sort of common background is offered for discussion, although there is no \textit{purely logical} grounds for distinguishing those readings.

\section{Reflections}\label{sec:reflections}

\subsection{Meaning of the connectives}



One of the most important themes in the philosophy of logic concerns the problem of whether changing a system of logic would require a corresponding change of the meaning of the connectives, leading to failure in legitimate rivalry between such systems. This is the famous \textit{meaning variance thesis}, commonly associated with Quine (see \cite[p.81]{Quine1986}, see also \cite{Paoli2003,Hjortland2022}), and it will be fruitful to present it here to elaborate on a contrast with the problem that we are highlighting in our paper. According to the meaning variance problem, for instance, if two systems disagree on the validity of the law of excluded middle, they may be understanding the meanings of the connectives involved in different terms: 
\begin{quote}
\dots the best explanation of this meaning change is that one or more of the logical constants occurring in the sentence have changed their meaning. This thought can be spelled out in a number of roughly equivalent ways, but all of them involve the idea that, for example, meaning what the classical logician means by ``not'' and ``or'' \textit{suffices} for acceptance of any instance of excluded middle whatsoever, at least potentially. So, if some particular instance of excluded middle isn't accepted, it must be because either ``not'' or ``or'' (or both) are being understood in a non-classical fashion. (\cite[p.423]{Warren2018})
\end{quote}
This is a difficult problem, and it is not even clear whether Quine himself would have endorsed the typical conclusion leading to scepticism about substantial disputes between different systems. It could actually be the case that there is such a radical meaning variation, while still it being the case that dispute concerning the appropriateness of choice of one of the systems may happen in the open, with disagreement about the validity specific laws and inferences; the case is that such dispute may be conduced according to a dispute on different reasons for accepting or rejecting a system involving the disputed laws and inferences, i.e. one may have a reason to prefer one system over the other. As Quine himself famously put: 
\begin{quote}
    [W]hoever denies the law of excluded middle changes the subject. This is not to say that he is wrong in so doing. In repudiating `$p$ or ${\sim}p$' he is indeed giving up classical negation, or perhaps alternation, or both; and he may have his reasons. \cite[p.83]{Quine1986}
\end{quote}
So, even though there may be disagreement on meaning, there is a dispute that can be conducted according to some kind of exchange of reasons pro and con each system. Quine mentions the simplification of quantum mechanics as one possible reason to revise classical logic (although he himself, of course, did not recommend taking that route). 

All of that is well known. However, that scenario provides for a nice platform from which to consider the problem we have been advancing here, which is way more radical. Given that we have at least four reading options for the same system, what results is that we actually have `change of subject', but without having the option to advance reasons, because we are not changing the logic. That is, there is a sense in which a change in the underlying meaning of the connectives does not carry over to a change in logic, so that one cannot carry the dispute with reasons for or against certain laws, because all the parties involved accept precisely the same logic (in fact, precisely even the same truth tables). So, the situation here is that we have change of meaning, without change of logic. One could not, for instance, argue that one reading is better because it simplifies quantum mechanics, while the other does not, given that from the point of view of logical consequence, any reading will do exactly the same as the others would.

The result is similar to Quinean scenarios of indeterminacy of translation; we may never be sure whether our understanding of the logical vocabulary of {\bf CNL$^2_4$} is the intended one by some other user. Someone whose knowledge of the meaning of the connectives were obtained exclusively from the familiarity with the derivation rules of the system would be able to learn completely incompatible lessons from the same teachings. Also, she would not be able to be sure that, whenever someone else uses the same logic, that someone is actually using the logic in the same meaning. This was illustrated before with the particular case of negation. 

That leads us directly to a worry that is also related to the meaning variance problem, which is the problem of determining whether the connectives we are labelling as conjunction, disjunction and negation are actually such logical connectives. Typically, the Quinean conclusion that a change of meaning engenders that we are no longer dealing with `the' logical connectives is quite well known. In a debate between a classical logician and a paraconsistent logician, remember, Quine famously wrote: 
\begin{quote}
My view of the dialogue is that neither party knows what he is talking about. They think that they are talking about negation, `$\sim$', `not'; but surely the notion ceased to be recognisable as negation when they took to regarding some conjunctions of the form `$p \ . {\sim}p$' as true, and stopped regarding such sentences as implying all others. Here, evidently, is the deviant logician’s predicament: when he tries to deny the doctrine he only changes the subject. (Quine \cite[p.81]{Quine1986})
\end{quote}
So, if we set aside the part of the comment involving failure of some inference (explosion), what Quine is saying is that negation sign is not a negation if it allows some contradictions to be true. That certainly applies to our case, where some readings of negation do allow for such scenarios, while others allow that negation changes truth values in ways incompatible with the expected truth to falsity behavior. That problem can also be extended to conjunction and disjunction, for sure. Probably, one could remark that a conjunction and a disjunction not satisfying the classical truth and falsity conditions are not the proper logical connectives. 

But once one adopts the thesis that meaning is defined in model theoretic terms, related to truth and falsity conditions, the possibility opens up for some kind of rescue of the two binary connectives in our four options. A classicist like Quine would require precisely the classical truth and falsity conditions for these connectives to be identified. However, once one leaves aside the assumption that the classical characterization is the correct one, still we can provide for a kind of minimal meaning conditions by restricting ourselves to the truth conditions for conjunction and disjunction. As one can easily check, the truth conditions for these connectives are held constant ---indeed, they are the classical conditions--- in all four options, leaving some flexibility for the falsity conditions (for more on the separation of such conditions while preserving the connective, see also the discussion in \cite{ArenhartOmori2024}). That is, if we can identify the connectives solely by their truth conditions, then, there is a sense in which these connectives are actually conjunction and disjunction. 

This is even clearer from the Haackian perspective that we are adopting here. Given the two-valued relational semantics, we can not only explain the differences in reading of the truth values in each case, but also make explicit the common truth conditions, and the diverging falsity conditions. There is a sense in which a classical part of the meaning of the connectives is preserved, with such truth conditions. It is not as easy to say the same about negation. Let us discuss it explicitly.

\subsection{Is $\sim$ a negation?}

The debate gets even more complicated when it comes to deal with negation: there is a question if $\Not$ is negation or not. Remember, if we follow the strategy made use of in \cite{HH2018contra,Omori2022varieties}, by applying the mechanical procedure described in \cite{OS-BD-FC}, then we obtain the following truth and falsity conditions, assuming that we rewrite the four values $\mathbf{t}, \mathbf{b}, \mathbf{n}$ and $\mathbf{f}$ as $\{ 1 \}, \{ 1, 0 \}, \emptyset$ and $\{ 0 \}$, respectively. 

Then, for options 1 and 4, we have the following conditions: 

\begin{tabular}{cc}
$1\in V(\Not A)$ iff $0\not\in V(A)$; & $0\in V(\Not A)$ iff $1\in V(A)$.
\end{tabular}

\noindent In view of these conditions, seen in the light of classicists' background assumption that holds for our Haackian strategy, the connective $\Not$ may be regarded as negation thanks to the falsity condition for negation. Remember, we are assuming that partial satisfaction of the classical truth conditions is enough for meaning attribution.

On the other hand, for options 2 and 3, we have the following conditions: 

\begin{tabular}{cc}
$1\in V(\Not A)$ iff $0\in V(A)$; & $0\in V(\Not A)$ iff $1\not\in V(A)$.
\end{tabular}

\noindent 
Now, that means the classical truth condition is satisfied. Again, that would ensure that $\Not$ can be regarded as negation by building on the classicists' understanding.\footnote{This is the understanding of negation taken, for example, in \cite{Avron2005,Omori2022varieties,ArenhartOmori2024}.}

Once the preferred option is settled, we are ready to interpret the results from Proposition~\ref{prop:valid-inferences}. Still, we need to take into account of the differences in the definition of the consequence relation. For example, options 2 and 3 do agree on the truth and falsity conditions for negation, but they disagree on the definition of the consequence relation since option 2 has non-falsity preservation in mind, whereas option 3 has truth preservation mind. Therefore, the first two items from the proposition will imply negation incompleteness of {\bf CNL$^2_4$} when option 2 is taken, whereas the same items imply negation inconsistency of {\bf CNL$^2_4$} for those preferring option 3.

If one requires more to regard a unary connective as negation, for example requiring both truth and falsity condition to be the same with classical logic, or even requiring a different truth and/or falsity condition, then the unary connective $\Not$ will not be regarded as negation.\footnote{This seems to be the direction pursued in \cite{BGZ2022}.} An alternative route to discuss the issue of negation is related to the possibility of defining different connectives that behave as classical negation inside the system.\footnote{For some discussions on classical negation in the context of {\bf FDE}, see \cite{DOBD+,SzmucOmori2022}.} 
In this case, negation is a connective inside the system, but that seems to conclude that the actual connective $\Not$ is not a negation. In this case, it seems that the question of how to interpret the connective $\Not$ seems to remain.




\section{Concluding remarks}\label{sec:concluding-remarks}

In this paper, we have explored how themes from Carnap and Haack display in the system {\bf CNL$^2_4$} advanced by Oleg Grigoriev and Dmitry Zaitsev. The themes are a direct consequence of some re-workings we provided for the system. We have not only provided for an alternative proof system, but also discussed how  four options of readings for the semantics are available, connecting them with other systems available in the literature. The very idea that four readings are available for the original four truth values of {\bf CNL$^2_4$} gives rise to the Carnap problem: distinct approaches to truth and falsity are available and are put on the top of the same formal semantics. That gives rise to incompatible readings that still fit the system, in a sense that, broadly put, makes the system compatible with quite incompatible readings of the truth values it intends to deal with. Our proposed strategy to make sense of this diversity is to use Haack's claim that the use of a two-valued setting is appropriate to confer intelligibility to the system. That was achieved through a relational semantics, which made clearer, from a classical perspective, how the options differ. We have also discussed, on these lights, the status of negation in the system, connecting the truth and falsity conditions to the classicists' demands, but also identifying candidates for a classical negation inside the system, as provided by functional completeness.

\section*{Appendix}

\noindent Here are the details of the proof of Lemma~\ref{lem:canonical-valuation1}. For the case of negation, it goes as follows.
\begin{itemize}
\setlength{\parskip}{0cm}
\setlength{\itemsep}{0cm}
\item $v_\Sigma(\Not B){=}\mathbf{1}$ iff $v_\Sigma(B){=}\mathbf{j}$ (by the definition of $v_\Sigma$) iff $\Sigma\not\vdash B$ and $\Sigma\vdash \Not B$ (by IH) iff $\Sigma\vdash \Not B$ and $\Sigma\vdash \Not \Not B$ (by $(\Not\Not 2)$ for the left-to-right direction and $(\Not\Not 1)$ for the other direction).

\item $v_\Sigma(\Not B){=}\mathbf{i}$ iff $v_\Sigma(B){=}\mathbf{1}$ (by the definition of $v_\Sigma$) iff $\Sigma\vdash B$ and $\Sigma\vdash \Not B$ (by IH) iff $\Sigma\vdash \Not B$ and $\Sigma\not\vdash \Not \Not B$ (by $(\Not\Not 1)$ for the left-to-right direction and $(\Not\Not 2)$ for the other direction).

\item $v_\Sigma(\Not B){=}\mathbf{j}$ iff $v_\Sigma(B){=}\mathbf{0}$ (by the definition of $v_\Sigma$) iff $\Sigma\not\vdash B$ and $\Sigma\not\vdash \Not B$ (by IH) iff $\Sigma\not\vdash \Not B$ and $\Sigma\vdash \Not \Not B$ (by $(\Not\Not 2)$ for the left-to-right direction and $(\Not\Not 1)$ for the other direction).

\item $v_\Sigma(\Not B){=}\mathbf{0}$ iff $v_\Sigma(B){=}\mathbf{i}$ (by the definition of $v_\Sigma$) iff $\Sigma\vdash B$ and $\Sigma\not\vdash \Not B$ (by IH) iff $\Sigma\not\vdash \Not B$ and $\Sigma\not\vdash \Not \Not B$ (by $(\Not\Not 1)$ for the left-to-right direction and $(\Not\Not 2)$ for the other direction).
\end{itemize}
For the case of conjunction, it goes as follows.
\begin{itemize}
\setlength{\parskip}{0cm}
\setlength{\itemsep}{0cm}
\item $v_\Sigma(B\land C){=}\mathbf{1}$ iff $v_\Sigma(B){=}\mathbf{1}$ and $v_\Sigma(C){=}\mathbf{1}$ (by the definition of $v_\Sigma$) iff $\Sigma\vdash B$ and $\Sigma\vdash \Not B$ and $\Sigma\vdash C$ and $\Sigma\vdash \Not C$ (by IH) iff $\Sigma\vdash B\land C$ and $\Sigma\vdash \Not (B\land C)$. 

\item $v_\Sigma(B\land C){=}\mathbf{i}$ iff ($v_\Sigma(B){=}\mathbf{1}$ and $v_\Sigma(C){=}\mathbf{i}$) or ($v_\Sigma(B){=}\mathbf{i}$ and $v_\Sigma(C){=}\mathbf{i}$) or ($v_\Sigma(B){=}\mathbf{i}$ and $v_\Sigma(C){=}\mathbf{1}$) (by the definition of $v_\Sigma$) iff ($\Sigma\vdash B$ and $\Sigma\vdash \Not B$ and $\Sigma\vdash C$ and $\Sigma\not\vdash \Not C$) or ($\Sigma\vdash B$ and $\Sigma\not\vdash \Not B$ and $\Sigma\vdash C$ and $\Sigma\not\vdash \Not C$) or ($\Sigma\vdash B$ and $\Sigma\not\vdash \Not B$ and $\Sigma\vdash C$ and $\Sigma\vdash \Not C$) (by IH) iff $\Sigma\vdash B\land C$ and $\Sigma\not\vdash \Not (B\land C)$. 

\item $v_\Sigma(B\land C){=}\mathbf{j}$ iff ($v_\Sigma(B){=}\mathbf{1}$ and $v_\Sigma(C){=}\mathbf{j}$) or ($v_\Sigma(B){=}\mathbf{j}$ and $v_\Sigma(C){=}\mathbf{j}$) or ($v_\Sigma(B){=}\mathbf{j}$ and $v_\Sigma(C){=}\mathbf{1}$) (by the definition of $v_\Sigma$) iff ($\Sigma\vdash B$ and $\Sigma\vdash \Not B$ and $\Sigma\not\vdash C$ and $\Sigma\vdash \Not C$) or ($\Sigma\not\vdash B$ and $\Sigma\vdash \Not B$ and $\Sigma\not\vdash C$ and $\Sigma\vdash \Not C$) or ($\Sigma\not\vdash B$ and $\Sigma\vdash \Not B$ and $\Sigma\vdash C$ and $\Sigma\vdash \Not C$) (by IH) iff $\Sigma\not\vdash B\land C$ and $\Sigma\vdash \Not (B\land C)$. 

\item $v_\Sigma(B\land C){=}\mathbf{0}$ iff $v_\Sigma(B){=}\mathbf{0}$ or $v_\Sigma(C){=}\mathbf{0}$ or ($v_\Sigma(B){=}\mathbf{i}$ and $v_\Sigma(C){=}\mathbf{j}$) or ($v_\Sigma(B){=}\mathbf{j}$ and $v_\Sigma(C){=}\mathbf{i}$) (by the definition of $v_\Sigma$) iff ($\Sigma\not\vdash B$ and $\Sigma\not\vdash \Not B$) or ($\Sigma\not\vdash C$ and $\Sigma\not\vdash \Not C$) or ($\Sigma\vdash B$ and $\Sigma\not\vdash \Not B$ and $\Sigma\not\vdash C$ and $\Sigma\vdash \Not C$) or ($\Sigma\not\vdash B$ and $\Sigma\vdash \Not B$ and $\Sigma\vdash C$ and $\Sigma\not\vdash \Not C$) (by IH) iff $\Sigma\not\vdash B\land C$ and $\Sigma\not\vdash \Not (B\land C)$. 
\end{itemize}
The case for disjunction is similar to the case for conjunction.

%

\end{document}